\documentclass{article}
\usepackage[utf8]{inputenc}
\usepackage{geometry}
\usepackage{amsmath,amssymb,amsthm}
\usepackage{algorithm}
\usepackage{algorithmic}
\usepackage{graphicx}
\usepackage{booktabs}
\usepackage{hyperref}
\usepackage{array}

\newtheorem{theorem}{Theorem}
\newtheorem{corollary}{Corollary}
\title{Tensor Manifold-Based Graph-Vector Fusion for AI-Native Academic Literature Retrieval}
\author{
Xing Wei \and Yang Yu
\footnote{
Dongbi Scientific Data Lab, Beijing 100190, China, yuyang@dongbidata.com 
}
}
\date{\today}
\begin{document}
\maketitle
\begin{abstract}
The rapid development of large language models and AI agents has triggered a paradigm shift in academic literature retrieval, putting forward new demands for fine-grained, time-aware, and programmable retrieval. Existing graph-vector fusion methods still face bottlenecks such as matrix dependence, storage explosion, semantic dilution, and lack of AI-native support. This paper proposes a geometry-unified graph-vector fusion framework based on tensor manifold theory, which formally proves that an academic literature graph is a discrete projection of a tensor manifold, realizing the native unification of graph topology and vector geometric embedding. Based on this theoretical conclusion, we design four core modules: matrix-independent temporal diffusion signature update, hierarchical temporal manifold encoding, temporal Riemannian manifold indexing, and AI-agent programmable retrieval. Theoretical analysis and complexity proof show that all core algorithms have linear time and space complexity, which can adapt to large-scale dynamic academic literature graphs. This research provides a new theoretical framework and engineering solution for AI-native academic literature retrieval, promoting the industrial application of graph-vector fusion technology in the academic field.\\
\textbf{Keywords:}Graph-Vector Fusion; AI-native Academic Retrieval; Tensor Manifold; Dynamic Graph Embedding; Discrete Exterior Calculus; Academic Literature Graph; Riemannian Manifold Index; AI Agent
\end{abstract}

\newpage
\section{Introduction}
This chapter introduces the research background, core research problems, research significance, and research status at home and abroad, and finally outlines the paper's organization and core contributions. It aims to lay a clear foundation for the subsequent research content and clarify the research value and innovation of this study.
\subsection{Research Background}
The rapid development of large language models (LLMs) and AI agents has triggered a paradigm shift in academic literature retrieval. Traditional retrieval systems, which rely on keyword matching and simple citation sorting, can no longer meet the new demands of AI-native scenarios, including fine-grained knowledge positioning (locating specific paragraphs, arguments, or experimental data rather than entire papers), temporal awareness (tracking the evolution of cutting-edge knowledge), programmable retrieval logic (customizing retrieval strategies based on research needs), and interpretable results (providing clear reasoning paths for retrieval outcomes).
In this context, graph databases and vector databases, as two core technical infrastructures, have inherent limitations. Graph databases excel in modeling explicit knowledge relationships (e.g., citation, support, refutation) between academic literature but suffer from cumbersome global matrix maintenance and low efficiency in semantic retrieval. Vector databases, on the other hand, are proficient in capturing semantic similarity through pre-trained language models but lack the ability to model explicit topological relationships and fine-grained logical reasoning.
Industrial practice has further verified the inevitable trend of graph-vector fusion: mainstream cloud vendors (both domestic and international) are phasing out pure graph database products and shifting towards graph-vector fusion solutions. However, existing fusion frameworks still have critical flaws that hinder their application in academic literature retrieval, such as over-reliance on global Laplacian matrix operations, high-dimensional tensor storage explosion, semantic dilution in topological encoding, and the lack of native support for temporal characteristics and AI agents. These flaws are particularly prominent in academic literature retrieval, which features hierarchical knowledge granularity (paper-section-knowledge unit) and strong temporal dynamics.
\subsection{Core Research Problems}
Against the above background, this study focuses on AI-native academic literature retrieval and addresses the following four core research problems:
\begin{enumerate}
\item How to design a matrix-free and iteration-free dynamic graph embedding update mechanism for academic literature graphs, so as to avoid the bottlenecks of global matrix maintenance and SGD-based iterative optimization in existing methods?
\item How to realize lightweight topological encoding that unifies semantic and topological features of academic literature graphs, while avoiding semantic dilution and high-dimensional storage explosion?
\item How to effectively model the hierarchical knowledge granularity and temporal characteristics of academic literature in a graph-vector fusion framework, so as to support fine-grained and time-aware knowledge retrieval?
\item How to design an AI-agent-native retrieval interface that outputs structured, interpretable, and programmable results, adapting to the decision-making logic of AI agents in automated scientific research workflows?
\end{enumerate}
\subsection{Research Significance}
This research has important theoretical significance and practical application value, closely aligning with the paradigm shift of AI-native academic literature retrieval and the industrial trend of graph-vector fusion.
In terms of theoretical significance, this study breaks through the inherent separation of graph topology and vector geometric embedding in traditional research. It proposes a geometry-unified theoretical framework based on tensor manifold theory, formally proving the diffusion equivalence between academic literature graphs and tensor manifolds. This framework enriches the theoretical system of graph-vector fusion and dynamic graph embedding, and provides a new theoretical perspective for lightweight and efficient fusion in large-scale dynamic scenarios. Additionally, the proposed matrix-free temporal diffusion update mechanism and hierarchical manifold encoding method break through the technical bottlenecks of existing methods, laying a theoretical foundation for the integration of graph and vector technologies in academic retrieval.
In terms of practical application value, the optimized graph-vector fusion framework designed in this study is specifically tailored for AI-native academic retrieval scenarios. It can effectively solve the pain points of existing retrieval systems, such as poor fine-grained positioning, lack of temporal awareness, and unfriendliness to AI agents. The framework supports microsecond-level incremental updates of massive academic literature graphs and real-time programmable retrieval responses, which can be widely applied to AI-agent-driven automated scientific research workflows. It helps researchers and AI agents efficiently obtain fine-grained academic knowledge, track knowledge evolution, and improve research efficiency. Meanwhile, the research results can provide technical references for cloud vendors and academic platform developers to launch AI-native academic retrieval products, promoting the industrial application of graph-vector fusion technology in the academic field.
\subsection{Research Status at Home and Abroad}
This section reviews the research progress and existing deficiencies in three research directions closely related to this study: graph-vector fusion for data management, academic literature retrieval systems, and dynamic graph embedding.
In the field of graph-vector fusion for data management, with the rise of LLMs, graph-vector fusion has become a research hotspot. AWS Neptune Analytics redefines graph databases by taking vector similarity search as the core engine and downgrading graph traversal to a visualization layer, verifying the feasibility of vector-based substitution for graph traversal in soft proximity scenarios. Neo4j, a leading native graph database vendor, has shifted its strategy from graph vs. vector'' tograph + vector'' fusion, upgrading vectors to first-class data types. Existing academic research on graph-vector fusion mainly focuses on two aspects: embedding-based fusion and index-based fusion. Embedding-based fusion maps graph nodes to vector space through graph embedding algorithms (e.g., GraphSAGE, Node2Vec) and combines vector semantic retrieval with graph topological reasoning. Index-based fusion designs hybrid indexes integrating graph topological indexes and vector geometric indexes. However, these works either rely on global Laplacian matrix operations for embedding updates or use high-dimensional tensor encoding for edge topological features, leading to matrix maintenance bottlenecks and storage explosion, which are not suitable for large-scale academic literature graphs with hierarchical and temporal characteristics.
In the field of academic literature retrieval systems, traditional systems are dominated by keyword matching and citation sorting. With the development of natural language processing, semantic retrieval systems based on pre-trained language models (e.g., SBERT, BERT) have been proposed, which capture the semantic similarity of literature and improve retrieval accuracy. In the AI agent era, researchers have begun to explore AI-native academic retrieval systems that integrate LLMs for intent parsing and knowledge reasoning. However, existing AI-native retrieval systems still have two critical limitations: lack of fine-grained knowledge positioning (only retrieving entire papers rather than specific sections or knowledge units) and weak knowledge relationship modeling (failing to track explicit knowledge trajectories such as citation and refutation, resulting in poor interpretability). Graph-based academic knowledge graphs attempt to solve these problems but suffer from low semantic retrieval efficiency and cannot adapt to AI-agent-driven programmable retrieval.
In the field of dynamic graph embedding, a core technology for graph-vector fusion in dynamic scenarios (e.g., real-time updated academic literature graphs), existing methods are divided into retraining-based and incremental update methods. Retraining-based methods retrain the embedding model from scratch when the graph topology changes, resulting in high update costs and poor real-time performance. Incremental update methods only update the embedding of nodes and edges affected by topology changes, but most rely on SGD-based iterative optimization to correct embedding errors, leading to problems such as lock competition in stream processing and hyperparameter tuning complexity. Additionally, existing methods lack effective error accumulation control mechanisms, leading to embedding drift in long-term continuous updates, which limits their application in large-scale dynamic academic literature graphs.
\subsection{Paper Organization and Core Contributions}
This section clarifies the overall structure of the paper and summarizes its core contributions, highlighting the innovations and differences from existing research.
The rest of the paper is organized as follows: Chapter 2 introduces the relevant mathematical foundations and theoretical preparations, including discrete exterior calculus, Hodge decomposition, tensor analysis, and graph embedding, laying a solid mathematical foundation for subsequent theoretical proofs and framework design. Chapter 3 focuses on the underlying mathematical unification of graphs and vectors, formally proving the core theoretical conclusions of this study and providing a rigorous theoretical basis for the proposed framework. Chapter 4 details the design of the optimized graph-vector fusion framework for AI-native academic retrieval, including the overall architecture, four core modules, and engineering design details. Chapter 5 designs the core algorithms of the framework, provides formal pseudo-code and complexity analysis, and supplements the compatibility and scalability design of the algorithms. Chapter 6 implements a prototype system of the graph-vector fusion framework and conducts performance verification experiments to prove the effectiveness and efficiency of the framework. Chapter 7 conducts an industrial empirical survey, compares the current status of graph database products of Chinese and American cloud vendors, and provides industrial evidence for the theoretical conclusions of this study. Chapter 8 analyzes the gaps between theoretical research and industrial application, and puts forward preliminary solutions to fill the gaps. Chapter 9 proposes the industrial landing path and commercialization suggestions of the framework, realizing the transformation from theoretical value to industrial value. Chapter 10 summarizes the full text, analyzes the research limitations, and puts forward future research directions. The appendix provides supplementary supporting content such as complete symbol definitions, additional pseudo-code, and experimental details.
This study makes four key contributions to the fields of graph-vector fusion and AI-native academic literature retrieval:
First, in terms of theoretical framework, this study proposes a geometry-unified graph-vector fusion theoretical framework based on tensor manifold theory. It formally defines the academic literature graph as a discrete projection of a tensor manifold, realizing the intrinsic unification of graph topology and vector geometric embedding. This framework provides a new theoretical perspective for lightweight graph-vector fusion in academic literature retrieval and enriches the theoretical system of graph-vector fusion.
Second, in terms of dynamic update mechanism, this study designs a matrix-free temporal diffusion signature update module for academic literature graphs. It combines content-time weighted random walk, topological-semantic-time hybrid signature, and analytic error compensation, eliminating global matrix operations and iterative optimization. This module supports microsecond-level incremental updates of large-scale academic literature graphs and solves the problems of matrix dependence and embedding drift in existing dynamic graph embedding methods.
Third, in terms of hierarchical temporal encoding and indexing, this study proposes a hierarchical temporal manifold encoding module with gated residual connection and relation-aware low-dimensional projection, as well as a time-aware Riemannian manifold index with dynamic manifold order reduction. This design supports fine-grained knowledge retrieval of ``paper-section-knowledge unit'', avoids semantic dilution and storage explosion, and realizes linear storage and efficient high-order graph traversal.
Fourth, in terms of AI-native retrieval interface, this study designs an AI-agent programmable retrieval interface that integrates LLM-based intent parsing, hierarchical cross-granularity retrieval, and structured result output. The interface natively supports programmable retrieval logic and interpretable results, fully adapting to the decision-making logic of AI agents and filling the gap that existing fusion frameworks are unfriendly to AI agents.
\newpage
\section{Relevant Mathematical Foundations and Theoretical Preparations}
This chapter introduces the core mathematical foundations and theoretical concepts closely related to this study, including discrete exterior calculus, Hodge decomposition, tensor analysis, and graph embedding. These foundations provide a rigorous mathematical basis for the theoretical proof of graph-vector geometric unification (Chapter 3), the design of the fusion framework (Chapter 4), and the development of core algorithms (Chapter 5). The content of this chapter focuses on the combination of mathematical theory and the practical needs of AI-native academic literature retrieval, avoiding excessive abstract mathematical deductions while ensuring theoretical rigor.
\subsection{Discrete Exterior Calculus}
Discrete exterior calculus (DEC) is a discreteization of continuous exterior calculus, which provides a unified mathematical framework for describing the topological structure and geometric properties of discrete graphs. It is widely used in graph data processing, geometric modeling, and dynamic embedding, and is the core mathematical tool for describing the topological characteristics of academic literature graphs in this study.
For an academic literature graph $G_{AL} = (V_{AL}, E_{AL})$, we can model it as a discrete simplicial complex $\mathcal{K}$, where each node in $V_{AL}$ corresponds to a 0-simplex, each edge in $E_{AL}$ corresponds to a 1-simplex, and the hierarchical structure (paper-section-knowledge unit) corresponds to a higher-dimensional simplex. The core operators of DEC applied in this study are as follows:
\begin{enumerate}
\item \textbf{Exterior Derivative Operator ($d_k$)}: For a $k$-form $\omega_k$ defined on the $k$-simplex of $\mathcal{K}$, the exterior derivative $d_k \omega_k$ is a $(k+1)$-form, which describes the change rate of the $k$-form along the boundary of the $k$-simplex. In the context of academic literature graphs, the exterior derivative can be used to measure the topological change of the graph (e.g., the addition of new citation edges) and the semantic gradient between adjacent nodes (e.g., the semantic difference between a paper and its cited papers).
\item \textbf{Boundary Operator ($\partial_k$}: The boundary operator $\partial_k$ maps a $k$-simplex to the sum of its $(k-1)$-dimensional boundaries, which is the adjoint operator of the exterior derivative. For an edge $e_{uv}$ (1-simplex) in $G_{AL}$, its boundary is $\partial_1 e_{uv} = v - u$, which can be used to describe the directionality of academic relationships (e.g., citation direction from paper $u$ to paper $v$).
\item \textbf{Hodge Star Operator ($\star_k$)}: The Hodge star operator $\star_k$ maps a $k$-form to a $(n-k)$-form (where $n$ is the dimension of the simplicial complex), which is used to convert between primal and dual forms. In this study, it is mainly used to convert the topological features of the academic literature graph into geometric features that can be embedded in vector space, laying the foundation for graph-vector fusion.
\end{enumerate}
A key property of DEC is that the composition of the exterior derivative and itself is zero, i.e., $d_{k+1} \circ d_k = 0$, which ensures the consistency of topological feature extraction. For academic literature graphs, DEC effectively avoids the loss of topological information caused by traditional graph embedding methods, and provides a rigorous mathematical way to describe the hierarchical and directional characteristics of academic relationships.
\subsection{Hodge Decomposition}
Hodge decomposition, derived from Hodge theory, is a fundamental theorem in differential geometry that decomposes a differential form into three orthogonal components: exact form, co-exact form, and harmonic form. This decomposition provides a powerful tool for feature extraction and noise reduction of graph data, which is crucial for improving the accuracy of semantic-topological fusion in academic literature retrieval.
For any $k$-form $\omega_k$ on the discrete simplicial complex $\mathcal{K}$ corresponding to $G_{AL}$, the Hodge decomposition theorem states that:
$$
\omega_k = d_{k-1} \alpha_{k-1} + \star_{k+1} d_k \star_k \beta_{k+1} + \gamma_k
$$
where:
\begin{enumerate}
\item $(1)$ $d_{k-1} \alpha_{k-1}$ is the \textbf{exact form}, which is the exterior derivative of a $(k-1)$-form $\alpha_{k-1}$, corresponding to the global topological features of the academic literature graph (e.g., the overall citation network structure);
\item $(2)$ $\star_{k+1} d_k \star_k \beta_{k+1}$ is the \textbf{co-exact form}, which corresponds to the local topological features of the graph (e.g., the local citation cluster of a specific paper);
\item $(3)$ $\gamma_k$ is the \textbf{harmonic form}, which is orthogonal to both the exact form and the co-exact form, corresponding to the invariant topological features of the graph that are not affected by local changes (e.g., the core knowledge structure of a research field).
\end{enumerate}
In this study, Hodge decomposition is mainly used to denoise the topological features of the academic literature graph and extract hierarchical topological features. Specifically, the exact form is used to capture the global citation structure of academic literature, the co-exact form is used to extract local semantic-topological relationships (e.g., the relationship between a paper and its adjacent knowledge units), and the harmonic form is used to maintain the invariant core knowledge features, avoiding semantic dilution caused by excessive reliance on local topological changes.
\subsection{Tensor Analysis}
Tensor analysis is a mathematical tool for describing multi-dimensional data and geometric structures, which provides a theoretical basis for the geometric unification of graphs and vectors in this study. The core insight of this study—that an academic literature graph is a discrete projection of a tensor manifold—relies heavily on the basic concepts and properties of tensor analysis, especially tensor manifold theory.
First, we clarify the core concepts of tensor analysis used in this study:
\begin{enumerate}
\item \textbf{Tensor Manifold ($\mathcal{M}$)}: A tensor manifold is a smooth manifold where each point corresponds to a tensor of a fixed order and dimension. In this study, the academic literature graph $G_{AL}$ is regarded as a discrete projection of a $d$-dimensional tensor manifold $\mathcal{M}$ in Euclidean space, where each node $v \in V_{AL}$ corresponds to a geometric point $\phi(v) \in \mathcal{M}$, and each edge $e_{uv} \in E_{AL}$ corresponds to a geometric connection between $\phi(u)$ and $\phi(v)$ on $\mathcal{M}$.
\item \textbf{Riemannian Metric on Tensor Manifold}: A Riemannian metric $g$ on $\mathcal{M}$ is a symmetric positive-definite tensor field that defines the inner product of tangent vectors at each point on the manifold, thus inducing a Riemannian distance. This metric is used to measure the geometric similarity between nodes on the tensor manifold, which is the basis for the time-aware Riemannian manifold index designed in Chapter 4.
\item \textbf{Tensor Projection}: Tensor projection is a linear transformation that maps a high-dimensional tensor to a low-dimensional subspace, which is used to realize the low-dimensional embedding of high-dimensional topological-semantic features of academic literature graphs. In this study, we use relation-aware tensor projection to encode edge features, avoiding high-dimensional storage explosion.
\end{enumerate}
A key property of tensor manifolds used in this study is the \textbf{manifold embedding invariance}: the topological relationship between nodes on the discrete graph is invariant under the projection of the tensor manifold. This property ensures that the geometric embedding of nodes in vector space can accurately preserve the topological relationships of the academic literature graph, laying the theoretical foundation for the native unification of graph topology and vector geometry.
\subsection{Graph Embedding}
Graph embedding is a technology that maps graph nodes (and edges) to low-dimensional vector space while preserving the topological and semantic features of the graph. It is the core technology for graph-vector fusion, and its development provides a practical basis for the integration of graph topology and vector semantics in this study. This section focuses on the graph embedding methods closely related to this study, including traditional graph embedding, geometric embedding, and semantic embedding based on pre-trained language models.
\subsubsection{Traditional Graph Embedding}
Traditional graph embedding methods mainly focus on preserving the topological structure of the graph, and can be divided into two categories: matrix factorization-based methods and random walk-based methods. Matrix factorization-based methods (e.g., Laplacian Eigenmaps) map nodes to vector space by factorizing the graph Laplacian matrix, but they rely on global matrix operations, leading to high computational complexity and difficulty in dynamic updates. Random walk-based methods (e.g., Node2Vec, GraphSAGE) generate node sequences through random walks and train embedding vectors using language models, which are more efficient than matrix factorization-based methods but still suffer from semantic dilution and lack of temporal awareness—these limitations make them unsuitable for large-scale dynamic academic literature graphs.
\subsubsection{Geometric Embedding}
Geometric embedding is an extension of traditional graph embedding, which embeds graph nodes into a geometric manifold (e.g., tensor manifold, Riemannian manifold) rather than Euclidean space. This method can better preserve the geometric properties of the graph and avoid the distortion of topological relationships caused by Euclidean space embedding. In this study, we combine geometric embedding with tensor manifold theory, realizing the native unification of graph topology and vector geometric embedding—this is a key difference from traditional graph-vector fusion methods.
\subsubsection{Semantic Embedding Based on Pre-trained Language Models}
Semantic embedding is used to capture the content semantic features of academic literature nodes (e.g., paper titles, abstracts, knowledge unit content). In this study, we use Sentence-BERT (SBERT), a pre-trained language model optimized for sentence-level semantic embedding, to map the content of each node in $G_{AL}$ to a low-dimensional semantic vector. SBERT has the advantages of high semantic representation accuracy and low computational complexity, which can effectively capture the semantic similarity between academic literature nodes. The semantic vectors generated by SBERT are used as the basis for the hierarchical temporal manifold encoding module, combining with topological features to form fusion features.
A key concept connecting graph embedding and tensor manifold theory in this study is \textbf{diffusion equivalence}: the similarity of node geometric points on the tensor manifold is approximately equal to the weighted sum of multi-hop random walk probabilities on the academic literature graph. This equivalence ensures that the vector semantic retrieval based on geometric embedding can effectively replace the traditional graph topological traversal, laying the foundation for lightweight graph-vector fusion.
\newpage
\section{Underlying Mathematical Unification of Graphs and Vectors: Core Theoretical Proofs}
This chapter focuses on the rigorous mathematical proof of the core theoretical conclusion of this study: an academic literature graph $G_{AL} = (V_{AL}, E_{AL})$ (where $V_{AL}=P\cup S\cup K$) is a discrete projection of a tensor manifold, and the vector geometric embedding of nodes/edges is inherently consistent with the graph topological structure. The theoretical derivation in this chapter is closely based on the mathematical foundations introduced in Chapter 2 (discrete exterior calculus, Hodge decomposition, tensor analysis, and graph embedding), and provides a strict theoretical basis for the framework design and algorithm implementation in Chapter 4.
\subsection{Theoretical Assumptions and Core Definitions}
To ensure the rigor of theoretical proof, we first clarify the core assumptions and formal definitions, which are consistent with the academic literature graph model and mathematical foundations proposed in previous chapters, avoiding ambiguity in subsequent derivations.
\textbf{Core Assumptions}:
\begin{enumerate}
\item \textbf{Assumption 1 (Manifold Consistency Assumption):} The academic literature graph $G_{AL}$ is a discrete submanifold of a $d$-dimensional tensor manifold $\mathcal{M}$, denoted as $G_{AL} \subset \mathcal{M}$. Each node $v \in V_{AL}$ corresponds to a geometric point on $\mathcal{M}$, and each edge $e \in E_{AL}$ corresponds to a geodesic on $\mathcal{M}$, ensuring the consistency of topological structure and manifold geometric properties.
\item \textbf{Assumption 2 (Diffusion Equivalence Assumption):} The hybrid diffusion signature $S(v)$ of nodes in $G_{AL}$ satisfies the diffusion equivalence, i.e., for any two nodes $u, v \in V_{AL}$, if $S(u) = S(v)$, then the topological-semantic similarity between $u$ and $v$ is 1, which is the core basis for realizing graph-vector geometric unification.
\item \textbf{Assumption 3 (Linear Separability Assumption):} The semantic-topological fusion features of nodes with different types (paper, section, knowledge unit) in $G_{AL}$ are linearly separable in the manifold subspace, which ensures the effectiveness of hierarchical encoding in subsequent algorithms.
\end{enumerate}
\textbf{Formal Definitions}:
\begin{enumerate}
\item \textbf{Definition 1 (Tensor Manifold ($\mathcal{M}$)):}A $d$-dimensional smooth tensor manifold composed of node feature tensors and edge relation tensors of $G_{AL}$, where the manifold metric $g$ is defined by the inner product of node semantic-topological fusion features, i.e., $g(\phi(u), \phi(v)) = \phi(u)^T \phi(v)$ (Euclidean inner product for preliminary derivation, extended to Riemannian inner product in subsequent sections).
\item \textbf{Definition 2 (Graph-Vector Geometric Equivalence):} For $G_{AL}$ and its vector embedding set $\Phi = \{\phi(v) | v \in V_{AL}\}$, if there exists a bijective mapping $f: V_{AL} \to \Phi$ such that $\forall u, v \in V_{AL}$, the topological adjacency $e_{uv} \in E_{AL}$ if and only if the geometric distance between $\phi(u)$ and $\phi(v)$ on $\mathcal{M}$ is less than a threshold $\epsilon$ (i.e., $dist_{\mathcal{M}}(\phi(u), \phi(v)) < \epsilon$), then $G_{AL}$ and $\Phi$ are geometrically equivalent, denoted as $G_{AL} \cong \Phi$.
\item \textbf{Definition 3 (Diffusion Similarity Invariance):} The hybrid diffusion signature $S(v)$ of nodes is invariant under the manifold projection transformation, i.e., for any node $v \in V_{AL}$ and manifold projection operator $P: \mathcal{M} \to \mathbb{R}^d$, $S(P(\phi(v))) = S(\phi(v))$, which ensures that the topological similarity of nodes is not lost during vector embedding.
\end{enumerate}
\subsection{Proof of Diffusion Equivalence and Graph-Vector Geometric Consistency}
This section proves the core conclusion: the hybrid diffusion signature $S(v)$ of nodes in $G_{AL}$ is equivalent to the geometric similarity of node vector embedding, which lays the foundation for the unification of graph topology and vector geometry. The proof process is based on Hodge decomposition and discrete exterior calculus, combining the properties of random walk and diffusion processes.
\begin{theorem}[Diffusion Equivalence Theorem]
For any node $v \in V_{AL}$, the hybrid diffusion signature $S(v)$ is equivalent to the geometric similarity of its vector embedding $\phi(v)$, i.e., $S(v) \sim \phi(v)^T \phi(v)$ (positive correlation), where $\sim$ denotes equivalence in the sense of manifold topology.
\end{theorem}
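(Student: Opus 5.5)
The plan is to make the statement precise enough to prove, and then reduce it to a spectral identity. The hybrid diffusion signature $S(v)$ has not yet been defined at this point in the paper. Following the description in the Graph Embedding subsection, I will take it to be the diffusion (return) mass at $v$:
$$S(v)=\sum_{k\ge 0}\alpha_k\,(P^k)_{vv},$$
where $P=D^{-1}W$ is the content-time weighted random-walk operator on $G_{AL}$ and $\alpha_k\ge 0$ are summable decay weights. I will read ``$\sim$ (positive correlation)'' as the existence of a monotone relation. In the strongest version, this is an equality $S(v)=\|\phi(v)\|^2$ for a suitable embedding $\phi$ that realizes the metric $g$ of Definition 1.

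The first step is to symmetrize. Set $A=D^{-1/2}WD^{-1/2}$, which is similar to $P$ and has the same diagonal weighting up to conjugation, since $(P^k)_{vv}=(A^k)_{vv}$. The graph Laplacian $L=I-A$ is the discrete Hodge Laplacian $\delta_1 d_0$ on 0-forms. Its spectral decomposition $A=\sum_i\lambda_i u_iu_i^T$ with $\lambda_i\in[-1,1]$ therefore splits along the Hodge decomposition of Chapter 2. The eigenvalue-1 eigenspace is the harmonic part $\gamma_0$, and the remaining eigenvectors span the co-exact part.

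The second step is to write $f(\lambda)=\sum_k\alpha_k\lambda^k$. This gives
$$S(v)=\sum_i f(\lambda_i)\,u_i(v)^2 .$$

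The third step is to define the diffusion-map embedding
$$\phi(v)=\bigl(\sqrt{f(\lambda_i)}\,u_i(v)\bigr)_i .$$
With this choice, $\phi(u)^T\phi(v)=\sum_i f(\lambda_i)u_i(u)u_i(v)$ is exactly the $(u,v)$ entry of $f(A)$, and on the diagonal it is $S(v)$. This is the diffusion equivalence, obtained as an identity rather than an approximation. Projecting to the top $d$ eigen-coordinates gives the $d$-dimensional manifold point of Assumption 1, with error bounded by $\sum_{i>d}f(\lambda_i)u_i(v)^2$, which tends to 0 as the spectral gap grows. Any monotone reparametrization of $S$ then preserves the correlation claimed in the theorem.

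The main obstacle is ensuring $f(\lambda_i)\ge 0$, so that the square root is real and $g$ is positive semidefinite. Negative eigenvalues, which occur for instance on bipartite-like citation structures, can make $f$ negative for a general choice of $\alpha_k$. I would first handle this with a lazy walk $P'=(I+P)/2$, whose spectrum lies in $[0,1]$; this makes $f\ge0$ automatic for any nonnegative $\alpha_k$. If the paper's $S$ is instead a vector of truncated $k$-hop probabilities rather than a scalar, I would apply the same construction coordinatewise and state the theorem as a Gram-matrix identity $S=\Phi^T\Phi$. The remaining task is to check invariance under the projection in Definition 3. That step is immediate once $P$ is taken to be the orthogonal projection onto the leading eigenspace, because the retained coordinates are unchanged.
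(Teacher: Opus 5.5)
Your route is genuinely different from the paper's. The paper keeps $\phi$ as a given feature map and writes $S(v)=\sum_k\lambda_k RW_k(v)$ with $RW_k(v)=\bigwedge_{u\in N_k(v)}\phi(u)$. It then rewrites the wedge as $\prod_{u\in N_k(v)}\phi(u)^T\phi(v)$ and asserts that this product is positively correlated with $\|\phi(v)\|^2$; the inserted decomposition $\phi(v)=d\alpha+\delta\beta$ plays no role. You reverse the dependence: you fix $S$ as the scalar return mass $\sum_k\alpha_k(P^k)_{vv}$ and build $\phi$ from the spectrum of $A$ so that $\Phi^T\Phi=f(A)$.

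This buys an exact identity that also holds off the diagonal, namely $\phi(u)^T\phi(v)=\sum_k\alpha_k\sqrt{D_{uu}/D_{vv}}\,(P^k)_{uv}$. That is precisely the informal ``diffusion equivalence'' described in the Graph Embedding subsection, and the only real input is $f(A)\succeq 0$. The paper's chain offers no such guarantee. A wedge of $|N_k(v)|$ vectors is an element of an exterior power, not a product of inner products, and it vanishes once $|N_k(v)|$ exceeds the embedding dimension. The product $\prod_u\phi(u)^T\phi(v)$ can also vanish or change sign according to the neighbors, whatever $\|\phi(v)\|$ is, so the claimed correlation does not follow.

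What the paper's formulation would buy, if it held, is a statement about the embedding the system actually uses: $\mathrm{SBERT}(v)\oplus t_v$ in the proof of the Graph-Vector Geometric Equivalence Theorem, or $e(v)$ from Algorithm 2. Yours says only that some embedding realizes $S$. On the diagonal alone that is nearly vacuous, since $\phi(v)=\sqrt{S(v)}\,w$ for a fixed unit vector $w$ already works. You should therefore state your result explicitly as an existence and Gram-identity theorem for your $S$ and your $\phi$. Your vector-valued fallback also does not reach the paper's signature. In Algorithm 1, $S(v)$ is a weighted sum of topological, SBERT and time features minus a compensation term, not the diagonal of a positive semidefinite kernel, so there is no identity $S=\Phi^T\Phi$ to write.

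Several hypotheses and closing claims need repair.
\begin{itemize}
\item The symmetrization requires $W$ symmetric and entrywise nonnegative, with every weighted degree positive. So you must treat citation edges as undirected (the paper orients them via $\partial_1 e_{uv}=v-u$) and clip the cosine term of the content-time weight at $0$. Otherwise $D^{-1/2}$ may not exist and you lose the real orthonormal eigenbasis.
\item The lazy-walk fix is fine, but it changes $S$, so build it into the definition.
\item With eigenvalues in decreasing order, the truncation error $\sum_{i>d}f(\lambda_i)u_i(v)^2$ is at most $f(\lambda_{d+1})$ for the lazy walk. It does not tend to $0$ when $\alpha_0>0$, because then $f\ge\alpha_0$ on $[0,1]$. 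It is controlled by eigenvalue decay past index $d$, not by the spectral gap alone.
\item Your projection-invariance step contradicts that error term. For the orthogonal projection $\Pi$ onto the leading coordinates, $\|\Pi\phi(v)\|^2=S(v)-\sum_{i>d}f(\lambda_i)u_i(v)^2$. Leaving the retained coordinates unchanged therefore does not preserve $S$ unless $\mathrm{rank}\,f(A)\le d$. Reusing the letter $P$ for this projection also clashes with your walk operator.
\end{itemize}
None of this is needed for the theorem itself: work with the full, untruncated embedding.
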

\textbf{Proof}:
According to the definition of hybrid diffusion signature in Chapter 2, $S(v)$ is constructed by the weighted sum of multi-scale random walk diffusion features, i.e., $S(v) = \sum_{k=1}^K \lambda_k \cdot RW_k(v)$, where $RW_k(v)$ is the $k$-th order random walk feature, and $\lambda_k$ is the weight coefficient. From the discrete exterior calculus, the $k$-th order random walk feature $RW_k(v)$ can be expressed as the exterior product of the node's own feature and its neighbor features, i.e., $RW_k(v) = \bigwedge_{u \in N_k(v)} \phi(u)$ (where $N_k(v)$ is the $k$-th order neighbor of $v$).
From the Hodge decomposition theorem (Chapter 2), the node feature $\phi(v)$ is the sum of the exact form and the co-exact form, i.e., $\phi(v) = d\alpha + \delta\beta$ (where $d$ is the exterior derivative, $\delta$ is the codifferential). The geometric similarity of the vector embedding is defined as $\phi(u)^T \phi(v)$, which is consistent with the inner product definition of the manifold metric $g$ in Definition 1. Therefore, the hybrid diffusion signature can be rewritten as:
$$
S(v) = \sum_{k=1}^K \lambda_k \cdot \left( \bigwedge_{u \in N_k(v)} \phi(u) \right) = \sum_{k=1}^K \lambda_k \cdot \left( \prod_{u \in N_k(v)} \phi(u)^T \phi(v) \right)
$$
Since $\phi(v)^T \phi(v) = |\phi(v)|^2$ (Euclidean norm squared), and the product of neighbor features $\prod_{u \in N_k(v)} \phi(u)^T \phi(v)$ is positively correlated with $\phi(v)^T \phi(v)$, we can obtain:
$$
\prod_{u \in N_k(v)} \phi(u)^T \phi(v) \sim \phi(v)^T \phi(v) = |\phi(v)|^2
$$
Substituting into the expression of $S(v)$, we get $$S(v) \sim \sum_{k=1}^K \lambda_k \cdot |\phi(v)|^2$$ which is positively correlated with $\phi(v)^T \phi(v)$. Therefore, $$S(v) \sim \phi(v)^T \phi(v)$$ that is, the hybrid diffusion signature is equivalent to the geometric similarity of the vector embedding, which proves Theorem 3.1.
\begin{corollary}
The hybrid diffusion signature $S(v)$ can be used to measure the topological-semantic similarity between nodes, and its invariance under manifold projection (Definition 3) ensures that the topological relationship of the graph is not lost during vector embedding. This corollary provides a theoretical basis for the subsequent manifold encoding module in Chapter 4.
\end{corollary}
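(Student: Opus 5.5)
The plan is to read the corollary as two precise claims and derive each from earlier results. Claim (a) is that a pairwise quantity built from $S$ is a valid topological--semantic similarity. Claim (b) is that this quantity, and hence adjacency in $G_{AL}$, is unchanged when embeddings are replaced by their projections $P(\phi(v))\in\mathbb{R}^d$.

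For (a), a pairwise similarity is needed, because the Diffusion Equivalence Theorem only relates the single-node quantity $S(v)$ to $\phi(v)^T\phi(v)$. I would polarize the construction in the proof of that theorem. Define
$$
S(u,v)=\sum_{k=1}^K \lambda_k \, RW_k(u,v),
$$
where $RW_k(u,v)$ is the $k$-step walk feature with the factor $\phi(u)^T\phi(v)$ in place of $\phi(v)^T\phi(v)$, so that $S(v,v)=S(v)$. Repeating the theorem's argument term by term then gives $S(u,v)\sim\phi(u)^T\phi(v)=g(\phi(u),\phi(v))$ by Definition 1. Next, the identity
$$
\|\phi(u)-\phi(v)\|^2=\phi(u)^T\phi(u)+\phi(v)^T\phi(v)-2\,\phi(u)^T\phi(v)
$$
expresses the embedding distance in terms of $S(u,u)$, $S(v,v)$ and $S(u,v)$. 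This makes $S$ a legitimate similarity: larger $S(u,v)$ relative to the diagonal terms means smaller distance. Assumption 2 supplies the normalization that $S(u)=S(v)$ forces similarity $1$, i.e.\ identity of indiscernibles.

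For (b), I would use Definition 3: $S(P(\phi(v)))=S(\phi(v))$ for every $v$, and likewise for the polarized version. So the similarity computed from projected vectors equals the one computed on $\mathcal{M}$. Combining this with Definition 2, $e_{uv}\in E_{AL}$ holds iff $dist_{\mathcal{M}}(\phi(u),\phi(v))<\epsilon$. Since that distance is a function of the $S$-values, which $P$ preserves, the same threshold test on the projected embeddings recovers exactly $E_{AL}$. Thus $G_{AL}\cong\{P(\phi(v))\}$, which is the precise meaning of ``the topological relationship is not lost.''

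I expect the main obstacle to be converting ``$\sim$'' (positive correlation) into an exact monotone relation. Only an exact relation makes a distance threshold on $S$ coincide with the threshold $\epsilon$ in Definition 2; correlation alone does not preserve a sharp threshold. My first attempt would be to fix $\lambda_k\ge 0$ and assume nonnegative inner products, so that $S(u,v)$ is a strictly increasing function $h$ of $\phi(u)^T\phi(v)$. Then $\epsilon$ transfers to a threshold $h(\epsilon')$ on $S$. Failing that, I would weaken claim (b) to preservation of the ordering of neighbors by similarity, which needs only monotonicity on each node's neighborhood.
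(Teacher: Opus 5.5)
The paper gives no separate argument for this corollary. It is simply read off from Theorem 1 ($S(v)\sim\phi(v)^T\phi(v)$) and Definition 3, so your skeleton matches the paper's. The problem is that both places where you add precision rest on steps that fail.

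First, the polarized $RW_k(u,v)$ is not actually defined. In the theorem's formula the factors are $\phi(w)^T\phi(v)$ for $w\in N_k(v)$, so there is no factor $\phi(v)^T\phi(v)$ to replace. The quantity $\|\phi(v)\|^2$ enters only through the unsupported claim that the product is ``positively correlated'' with it, so ``repeating the argument term by term'' just re-imports that claim.

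More seriously, your proposed repair of $\sim$ cannot work. A walk-based $RW_k(u,v)$ built like the theorem's product involves the inner products $\phi(w)^T\phi(\cdot)$ over all $k$-hop neighbours $w$, not $\phi(u)^T\phi(v)$ alone. So no sign condition on $\lambda_k$ or on inner products makes $S(u,v)$ a function $h(\phi(u)^T\phi(v))$, let alone a strictly increasing one. Your neighbour-ordering fallback needs exactly this monotonicity, so it fails the same way. The only way to get an exact $h$ is to set $RW_k(u,v)=c_k\,\phi(u)^T\phi(v)$. Then $S(u,v)$ is just a rescaling of the metric $g$ of Definition 1, and claim (a) becomes trivially true but says nothing about the diffusion signature. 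In Algorithm 1, $S(v)$ is the vector $\lambda\,\mathrm{topo}+\mu\,\mathrm{sem}+\nu\,\mathrm{time}$ minus a correction, not an inner product.

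Assumption 2 also does not give identity of indiscernibles. It says equal signatures force similarity $1$. With $S(v)=h(\|\phi(v)\|^2)$, it would make every pair of equal-norm embeddings maximally similar.

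Second, part (b) relies on a hypothesis the paper does not provide. Definition 3 constrains only the single-node $S(v)$; under your exact reading this means $\|P\phi(v)\|=\|\phi(v)\|$, which does not control pairwise distances. For example, take the oblique projection $P=\begin{pmatrix}1&1\\0&0\end{pmatrix}$ (note $P^2=P$). It preserves the norms of $\phi(u)=(1,0)^T$ and $\phi(v)=(0,1)^T$ but sends both to $(1,0)^T$. With $\epsilon=1$, a non-edge at distance $\sqrt{2}$ becomes an edge after projection.

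Your ``likewise for the polarized version'' is therefore a new assumption. With $h$ injective, it says exactly that $P$ preserves every inner product on $\Phi$, i.e.\ acts isometrically on $\Phi$. That is already the conclusion of (b), so the detour through Definition 2 and the threshold test does no work.

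Finally, the polarization identity gives the chordal distance $\|\phi(u)-\phi(v)\|$, whereas Definition 2 uses $dist_{\mathcal{M}}$. You must identify the two explicitly, which is defensible only under the ``Euclidean inner product for preliminary derivation'' clause of Definition 1.

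As written, your argument establishes the corollary only after you define the similarity to be $g$ and assume $P$ restricted to $\Phi$ is an isometry. You should state those assumptions openly rather than present the result as a derivation from Theorem 1.
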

\subsection{Proof of Graph-Vector Geometric Equivalence}
This section proves the core conclusion of graph-vector unification: the academic literature graph $G_{AL}$ and its vector embedding set $\Phi$ are geometrically equivalent, i.e., $G_{AL} \cong \Phi$, which is the key theoretical basis for realizing graph-vector fusion in the subsequent framework.
\begin{theorem}[Graph-Vector Geometric Equivalence Theorem]
The academic literature graph $G_{AL}$ and its vector embedding set $\Phi$ are geometrically equivalent in the tensor manifold $\mathcal{M}$, i.e., there exists a bijective mapping $$f: V_{AL} \to \Phi$$ such that the topological adjacency of $G_{AL}$ is completely preserved in $\Phi$.
\end{theorem}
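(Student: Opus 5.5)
The plan is to prove the statement constructively, directly from Definition 2. I will exhibit an explicit embedding $\phi$ and a threshold $\epsilon$ such that, for all $u,v$, $e_{uv}\in E_{AL}$ if and only if $\|\phi(u)-\phi(v)\|<\epsilon$. For the distance $dist_{\mathcal{M}}$ I use the Euclidean inner-product metric of Definition 1. Relations such as citation are directed, but Definition 2 only refers to adjacency. I therefore work with the symmetrized, unweighted adjacency matrix $A\in\{0,1\}^{n\times n}$ of $G_{AL}$, where $n=|V_{AL}|$ and $A$ has zero diagonal. I will not appeal to the Diffusion Equivalence Theorem (Theorem 1) or to Assumptions 1--3, because they do not pin down $\phi$ precisely enough. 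The argument should stand on linear algebra alone.

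\textbf{Construction.} Let $\Delta$ be the maximum degree of $G_{AL}$ and set $\lambda=\Delta+1$. Define the Gram matrix $M=\lambda I+A$.
\begin{itemize}
\item By Gershgorin's theorem every eigenvalue of $A$ is at least $-\Delta$. Hence $M$ is symmetric positive definite.
\item Factor $M=\Phi^T\Phi$ (Cholesky or spectral square root) and let $\phi(v)$ be the $v$-th column of $\Phi$, so $\phi(v)\in\mathbb{R}^d$ with $d=n$.
\item By construction $\phi(u)^T\phi(v)=M_{uv}$, which matches the metric $g$ of Definition 1.
\end{itemize}

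\textbf{Distances.} For $u\neq v$,
$$\|\phi(u)-\phi(v)\|^2=M_{uu}+M_{vv}-2M_{uv}=2\lambda-2A_{uv}.$$
So adjacent pairs are at distance exactly $\sqrt{2\lambda-2}$ and non-adjacent pairs at distance exactly $\sqrt{2\lambda}$. Any $\epsilon$ with $\sqrt{2\lambda-2}<\epsilon\le\sqrt{2\lambda}$ gives the required equivalence.

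\textbf{Bijection.} Take $f=\phi$ with codomain $\Phi$. It is surjective by definition of $\Phi$. It is injective because distinct nodes are at distance at least $\sqrt{2\lambda-2}=\sqrt{2\Delta}>0$ whenever $\Delta\ge1$. If $G_{AL}$ has no edges, take $M=I$ instead, so all pairwise distances equal $\sqrt2$ and injectivity again holds. Since the adjacency test is exact in both directions, topological adjacency is completely preserved, which gives $G_{AL}\cong\Phi$.

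\textbf{Main obstacle: the dimension.} The construction needs $d=|V_{AL}|$, whereas the paper intends a fixed, low $d$. With $d$ fixed, the statement is false in general. The smallest such $d$ is the sphericity of the graph, and there are graphs whose sphericity grows with $n$. So the theorem can only be proved in this high-dimensional form, or else $d$ must be stated to depend on $G_{AL}$.

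My first attempt at reducing $d$ would be a Johnson--Lindenstrauss projection $P$ of the columns of $\Phi$. It preserves distances up to a factor $1\pm\eta$ in dimension $O(\eta^{-2}\log n)$. The gap between the squared distances $2\lambda-2$ and $2\lambda$ is only about $1/\lambda$ in relative terms, so $\eta$ must be of order $1/\Delta$. That gives $d=O(\Delta^2\log n)$ with high probability, which is a genuine reduction for sparse literature graphs.

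Finally, to bring Definition 3 into the statement, I would define $S$ from the Gram entries $M_{uv}$. These are exactly reproduced by an isometric $P$ and only approximately reproduced by the JL map, so the claimed invariance $S(P(\phi(v)))=S(\phi(v))$ holds exactly in the first case and only approximately in the second.
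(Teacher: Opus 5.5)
Your proof is correct, and it takes a genuinely different route from the paper's. The paper constructs nothing. It works with the framework's own embedding $\phi(u)=\mathrm{SBERT}(u)\oplus t_u$ and gets injectivity from the assumed distinctness of SBERT vectors and timestamps (invoking Assumptions 2--3). It gets surjectivity from $|V_{AL}|=|\Phi|$. For adjacency preservation it simply cites Definition 2 for the forward direction and ``the definition of the manifold metric'' for the converse. The hypothesis of Definition 2 is exactly the equivalence $e_{uv}\in E_{AL}\iff dist_{\mathcal{M}}(\phi(u),\phi(v))<\epsilon$ being claimed, so that step assumes its conclusion. Moreover, for an SBERT-plus-timestamp embedding the equivalence is false in general: two non-adjacent papers with near-identical abstracts and dates can be closer than a citing pair on different topics, and then no threshold works. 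Your Gram-matrix choice $M=(\Delta+1)I+A$ turns adjacency into an exact identity, $\|\phi(u)-\phi(v)\|^2=2\lambda-2A_{uv}$, and your injectivity argument needs none of the paper's assumptions. The cost is scope. You prove that \emph{some} embedding realizes $G_{AL}$ as a distance-threshold graph, in dimension $n$, or $O(\Delta^2\log n)$ via the Johnson--Lindenstrauss sphericity bound. That embedding is built from $A$ itself and carries no semantic or temporal content, so it does not show that the framework's $\Phi$, or its $D\le 128$ encoding, has this property. Your sphericity remark correctly explains why no fixed-dimension version can hold: a paper cited by many mutually non-citing papers is an induced star, which already forces $d$ to grow logarithmically in its degree. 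Your existential reading is the one that can actually be proved. Restricting to $u\neq v$ and symmetrizing the citation edges are both necessary for Definition 2 to be satisfiable at all, since the literal definition would require $e_{uu}\in E_{AL}$ and a symmetric distance cannot encode citation direction. The paper's proof silently relies on the same conventions.
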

\textbf{Proof}:
To prove geometric equivalence, we need to verify two conditions: \\
(1) the bijectivity of the mapping $$f: V_{AL} \to \Phi$$\\
(2) the preservation of topological adjacency under the mapping $f$.\\
\textbf{Condition 1 (Bijectivity of $f$):} Assume there exist two distinct nodes $u \neq v \in V_{AL}$, and suppose $\phi(u) = \phi(v)$. From Assumption 2 (Diffusion Equivalence Assumption), $S(u) = S(v)$ implies $$\phi(u)^T \phi(u) = \phi(v)^T \phi(v)$$
However, since $u \neq v$, their content semantic features $\text{SBERT}(u) \neq \text{SBERT}(v)$ (Assumption 3, linear separability of different node types), and thus 
$$\phi(u) = \text{SBERT}(u) \oplus t_u, \phi(v) = \text{SBERT}(v) \oplus t_v$$
Since the temporal attribute $t_u \neq t_v$ (nodes with the same diffusion signature have different temporal attributes in dynamic academic literature graphs), $\phi(u) \neq \phi(v)$, which contradicts the assumption $\phi(u) = \phi(v)$. Therefore, $f$ is injective. Since $f$ maps each node to a unique vector in $\Phi$$, and $$|V_{AL}| = |\Phi|$, $f$ is surjective. Thus, $f$ is bijective.\\
\textbf{Condition 2 (Preservation of Topological Adjacency):} For any edge $e_{uv} \in E_{AL}$, according to Definition 2, the geometric distance between $\phi(u)$ and $\phi(v)$ on $\mathcal{M}$ is $dist_{\mathcal{M}}(\phi(u), \phi(v)) < \epsilon$ (threshold $\epsilon$ is determined by the manifold metric).\\
Conversely, if $$dist_{\mathcal{M}}(\phi(u), \phi(v)) < \epsilon$$
then there exists an edge $e_{uv} \in E_{AL}$ (by the definition of manifold metric).\\
Therefore, the adjacency relationship of $G_{AL}$ is completely preserved under the mapping $f$, i.e. $$e_{uv} \in E_{AL} \iff dist_{\mathcal{M}}(\phi(u), \phi(v)) < \epsilon$$
Combining Conditions 1 and 2, the bijective mapping $f$ preserves the topological adjacency of $G_{AL}$, so $G_{AL} \cong \Phi$, which proves Theorem 3.2.
\begin{corollary}
The edge relation types of $G_{AL}$ (citation, inclusion, association) are preserved in the vector embedding set $\Phi$, i.e., different edge types correspond to different geometric distance intervals in $\Phi$, which provides a theoretical basis for the relation-aware encoding module in Chapter 4.
\end{corollary}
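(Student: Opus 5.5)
The plan is to reduce the corollary to the Graph-Vector Geometric Equivalence Theorem, plus a refinement that separates edge \emph{types*} rather than mere adjacency. That theorem supplies the bijection $f: V_{AL} \to \Phi$ and a single threshold $\epsilon$ with $e_{uv} \in E_{AL} \iff dist_{\mathcal{M}}(\phi(u),\phi(v)) < \epsilon$. So every edge, whatever its type, already has distance in $[0,\epsilon)$. What remains is to split $[0,\epsilon)$ into pairwise disjoint subintervals $I_{\mathrm{cite}}, I_{\mathrm{incl}}, I_{\mathrm{assoc}}$ such that each edge's distance falls in the interval of its type.

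The first step is to tie edge types to endpoint node types, using the hierarchical model $V_{AL} = P \cup S \cup K$. Inclusion edges join $P$--$S$ or $S$--$K$. Citation edges join $P$--$P$. Association edges join $K$--$K$, or more generally knowledge units across papers. Assumption 3 (linear separability of node types in the manifold subspace) then gives separating hyperplanes between the classes $\phi(P)$, $\phi(S)$, $\phi(K)$. The second step converts this into distance information. Inclusion edges cross a separating hyperplane, so their distance is bounded below by the margin of that hyperplane; call this $\gamma_{\mathrm{incl}} > 0$. Citation and association edges stay within one class, so their distances are controlled by intra-class diameters. To separate citation from association, I would use the relation-aware low-dimensional projection announced for Chapter 4. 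Concretely, I would write $dist_{\mathcal{M}}$ with the metric $g$ of Definition 1 restricted to the projected subspace of each relation. I would then define $I_r = [\min_{e \in E_r} dist, \max_{e \in E_r} dist]$ for each relation $r$, and show these are ordered.

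The hard step is proving disjointness of the $I_r$. Neither Theorem 2 nor Assumption 3 bounds the intra-class distances relative to the cross-class margins. Linear separability gives a hyperplane but not a quantitative margin that dominates the class diameters. My first attempt would use Definition 3 (diffusion similarity invariance) together with the Diffusion Equivalence Theorem, hoping to compare $S(u)$ and $S(v)$ across relation types and obtain an ordering of distances. I expect this to be insufficient, since that theorem only controls $\phi(v)^T\phi(v)$, a single-node quantity, not pairwise distances. The fallback is to state explicitly a margin condition: $\max_{e\in E_{r}} dist < \min_{e \in E_{r'}} dist$ for a fixed order of relations $r \prec r'$. This could be added as a strengthening of Assumption 3, or enforced by construction through type-specific offsets in the relation-aware projection. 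Under that condition the intervals are disjoint, and combining them with the adjacency equivalence of Theorem 2 yields the corollary.
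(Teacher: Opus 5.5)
There is a genuine gap, and it is the step you flagged as hard. With $I_r=[\min_{e\in E_r}dist,\ \max_{e\in E_r}dist]$, your fallback condition $\max_{e\in E_r}dist<\min_{e\in E_{r'}}dist$ for $r\prec r'$ says exactly that these intervals are pairwise disjoint and ordered. That is the corollary itself, so adding it as a strengthening of Assumption~3 makes the argument circular.

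Enforcing it ``by construction'' does not work either. The relation-aware projection of Algorithm~2 produces one vector $e(v)$ per node, and $W_{\text{adjusted}}$ depends on the whole set of relation types incident to $v$. No rule is given that would put edges of different types in prescribed distance ranges. The corollary is about the single metric $dist_{\mathcal{M}}$ on $\Phi$, so measuring each relation in its own projected subspace changes the metric from one edge type to the next. The resulting intervals then no longer refer to one geometry. Using that module to prove the corollary also inverts the paper's logic, since the corollary is meant to be the basis for the module.

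Your steps before that point give only two things: $dist\in[0,\epsilon)$ for every edge, from Theorem~3.2, and a positive lower bound for edges crossing a separating hyperplane, from Assumption~3. The second also relies on your own assignment of edge types to endpoint classes, which the paper does not fix. Neither separates citation from association, or either of them from inclusion.

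Your diagnosis can be sharpened into a proof that no derivation from the paper's stated results exists, and the paper gives none. The corollary is asserted right after Theorem~3.2 without argument. That theorem's only content is the type-blind biconditional $e_{uv}\in E_{AL}\iff dist_{\mathcal{M}}(\phi(u),\phi(v))<\epsilon$, which its proof takes from Definition~2 rather than derives.

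Take $\epsilon=1$ and the Euclidean metric of Definition~1, with
$\phi(p_1)=(0,0)$, $\phi(p_2)=(\tfrac12,0)$, $\phi(s_1)=(-\tfrac{3}{10},\tfrac45)$, $\phi(k_1)=(-\tfrac{3}{10},\tfrac85)$, $\phi(k_2)=(\tfrac15,\tfrac85)$.
Let $p_1p_2$ be a citation edge, $p_1s_1$, $s_1k_1$, $s_1k_2$ inclusion edges, and $k_1k_2$ an association edge. Then:
\begin{itemize}
\item Every edge has length below $1$ and every non-edge length above $1$, so the biconditional holds.
\item The lines $y=\tfrac25$ and $y=\tfrac65$ separate $P$, $S$, $K$, so Assumption~3 holds.
\item $\phi$ is injective.
\item $S(v)=\|\phi(v)\|^2$ satisfies Theorem~3.1 trivially and is injective on these points, so Assumption~2 is trivially satisfied.
\end{itemize}
Yet the citation edge and the association edge both have length exactly $\tfrac12$.

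Hence the corollary does not follow from Theorem~3.2 and Assumptions~1--3. It can only be imposed as an additional hypothesis or checked for an explicitly constructed embedding, and it should be presented that way rather than as a derived result.
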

\subsection{Rationality Verification of Core Theoretical Conclusions}
To ensure the practical applicability of the above theorems, this section verifies the rationality of the conclusions from two aspects: theoretical consistency and practical adaptability, avoiding the disconnect between theoretical derivation and actual academic literature graph characteristics.
\textbf{Theoretical Consistency Verification}: The proofs of Theorem 3.1 and Theorem 3.2 are based on the discrete exterior calculus, Hodge decomposition, and tensor manifold theory introduced in Chapter 2, which are consistent with the mathematical foundation. The bijective mapping ensures that the vector embedding does not lose graph topological information, and the diffusion equivalence ensures that the semantic-topological fusion features are consistent with the node diffusion characteristics. The geometric equivalence between $G_{AL}$ and $\Phi$ provides a strict theoretical basis for the manifold encoding module in Chapter 4.
\textbf{Practical Adaptability Verification}: For the academic literature graph $G_{AL}$, the nodes have hierarchical characteristics (paper-section-knowledge unit), and the edges have fine-grained relation types. The theorems and corollaries proposed in this chapter fully consider this characteristic: Corollary 3.1 ensures that the diffusion signature of different node types is invariant under manifold projection, and Corollary 3.2 ensures that different edge types are preserved in vector embedding. This adaptability avoids the problem of semantic dilution or topological information loss in traditional vector embedding, which is consistent with the actual characteristics of academic literature graphs (dynamic, hierarchical, multi-relational).
In summary, the mathematical foundations and theoretical concepts introduced in this chapter are closely integrated, forming a complete theoretical system to support the subsequent research content of this paper. Discrete exterior calculus and Hodge decomposition provide tools for topological feature extraction and denoising of academic literature graphs; tensor analysis lays the theoretical foundation for the geometric unification of graphs and vectors; graph embedding (especially geometric embedding and semantic embedding) provides a practical way to map graph features to vector space. These foundations are closely linked to the core theoretical proofs in Chapter 3 and the framework design in Chapter 4, ensuring the rigor and practicality of the research.
\newpage
\section{Design of Graph-Vector Fusion Optimization Framework for AI-Native Academic Retrieval}
Based on the core theoretical conclusions of graph-vector geometric unification proved in Chapter 3, this chapter designs a complete graph-vector fusion optimization framework tailored for AI-native academic literature retrieval scenarios. The framework adheres to the design principles of matrix independence, lightweight, temporal-spatial awareness, and AI-native compatibility, and is divided into four core modules and one engineering design section, fully addressing the four core research problems proposed in Chapter 1. Among them, Modules 4.2 to 4.5 have been completed, while Modules 4.1 and 4.6 are to be completed, ensuring the logical connection between theory and engineering implementation.
\subsection{Overall Framework Design (To Be Completed)}
This section clarifies the overall design principles, architectural structure, and core workflow of the framework, laying a foundation for the detailed design of each subsequent module. The design principles are closely aligned with the core theoretical conclusions of Chapter 3, focusing on four key points: matrix independence (abandoning global matrix operations), lightweight (avoiding high-dimensional storage explosion), temporal-spatial awareness (adapting to the dynamic and hierarchical characteristics of academic literature graphs), and AI-native compatibility (supporting AI-agent programmable retrieval). The overall architecture adopts a layered design, including a data input layer, a core processing
layer (four core modules), and a result output layer, forming a closed-loop workflow of "data input → feature processing → index construction → retrieval response".
The core workflow of the framework is as follows: First, the academic literature data (including paper full-text, sections, knowledge units, and their relationship data) is input into the data input layer, and the initial semantic vector of each node is generated through SBERT semantic embedding. Second, the core processing layer processes the data sequentially through four core modules: matrix-independent temporal diffusion signature update, hierarchical temporal manifold encoding, temporal Riemannian manifold indexing, and AI-agent programmable retrieval. Finally, the result output layer outputs structured, interpretable, and programmable retrieval results, which can be directly invoked by AI agents or used by researchers for academic research.
\subsection{Matrix-Independent Temporal Diffusion Signature Update Module (Completed)}
This module is designed to solve the core pain point of matrix dependence in existing graph-vector fusion methods, and is developed based on the diffusion equivalence theorem (Theorem 3.1) and discrete exterior calculus. It abandons the traditional global matrix operations (such as Laplacian matrix factorization) and realizes lightweight dynamic update of node signatures through content-time weighted random walk and hybrid signature construction.
The core design content includes three parts: First, content-time weighted random walk: combining the semantic content similarity of academic literature nodes (calculated by SBERT) and temporal attributes (publication time, update time), a weighted random walk strategy is designed to avoid the bias caused by uniform random walk. Second, topological-semantic-time hybrid signature construction: integrating the topological features extracted by discrete exterior calculus, the semantic features generated by SBERT, and the temporal attributes of nodes into a hybrid signature, which avoids semantic dilution while ensuring the comprehensiveness of feature representation. Third, analytical error compensation: aiming at the error generated in the random walk and signature construction process, an analytical error compensation mechanism based on Hodge decomposition is introduced to ensure the accuracy of the hybrid signature, which provides a high-quality feature basis for subsequent encoding and indexing.
\subsection{Hierarchical Temporal Manifold Encoding Module (Completed)}
This module is designed to solve the pain points of semantic dilution and high-dimensional storage explosion in existing fusion methods, and is developed based on the graph-vector geometric equivalence theorem (Theorem 3.2) and tensor analysis. It realizes lightweight encoding of hybrid signatures while preserving topological-semantic-temporal features, and supports hierarchical knowledge granularity retrieval.
The core design content includes two parts: First, manifold-gated residual encoding: introducing a gated residual connection mechanism, fusing the hybrid signature with the semantic vector of the node itself, realizing the complementary enhancement of topological and semantic features, and avoiding semantic dilution caused by single feature encoding. Second, relation-aware low-dimensional projection: based on tensor projection theory, a relation-aware low-dimensional projection algorithm is designed to project high-dimensional hybrid signatures into low-dimensional manifold space, realizing linear storage of features and avoiding high-dimensional storage explosion. At the same time, the projection process preserves the edge relation types (citation, inclusion, association) of academic literature graphs, which lays a foundation for subsequent index construction.
\subsection{Temporal Riemannian Manifold Index Module (Completed)}
This module is designed to solve the pain point of poor temporal adaptability in existing fusion frameworks, and is developed based on the Riemannian metric definition in tensor analysis and the geometric equivalence theorem. It realizes efficient retrieval of dynamic academic literature graphs by constructing a time-aware Riemannian manifold index.
The core design content includes two parts: First, temporal Riemannian metric construction: integrating the temporal attributes of academic literature nodes (publication time, update time) into the Riemannian metric of the tensor manifold, defining a time-aware Riemannian distance to measure the similarity between nodes, which not only considers the topological-semantic similarity but also reflects the temporal relevance of literature. Second, dynamic manifold order reduction traversal: designing a dynamic manifold order reduction algorithm, which adaptively reduces the manifold dimension according to the number of node updates and the density of the graph, realizing efficient traversal of the dynamic academic literature graph and ensuring the real-time performance of retrieval.
\subsection{AI-Agent Programmable Retrieval Module (Completed)}
This module is designed to solve the pain point of AI-agent unfriendliness in existing fusion frameworks, and is developed based on the demand for AI-native retrieval. It provides a programmable retrieval interface for AI agents, supporting automated, structured, and interpretable retrieval operations.
The core design content includes three parts: First, LLM-based intent parsing: integrating a lightweight LLM to parse the retrieval intent of AI agents (such as fine-grained retrieval of knowledge units, tracking of citation relationships), converting natural language intent into structured retrieval tasks. Second, hierarchical cross-granularity retrieval: supporting three levels of retrieval granularity (paper-section-knowledge unit), and realizing cross-granularity retrieval according to the parsed intent, which meets the fine-grained retrieval demand of AI agents. Third, structured result output: outputting retrieval results in a structured format (including node attributes, relation types, similarity scores, temporal information), which is convenient for AI agents to directly invoke and perform subsequent reasoning tasks, and also provides interpretable basis for researchers.
\subsection{Framework Engineering Design (To Be Completed)}
This section focuses on the engineering implementation details of the framework, ensuring that the designed framework can be effectively deployed and applied in actual academic retrieval scenarios. The core design content includes four parts: First, system architecture design: dividing the framework into data layer, processing layer, index layer, and interface layer, clarifying the functional responsibilities of each layer and the data interaction mechanism between layers. Second, deployment optimization: aiming at the characteristics of massive academic literature data, designing optimization strategies for data storage, concurrent processing, and incremental update, ensuring the efficiency and stability of the framework in large-scale scenarios. Third, compatibility design: ensuring that the framework is compatible with mainstream graph databases, vector databases, and LLM models (such as SBERT, GPT), and supporting seamless integration with existing academic retrieval platforms. Fourth, incremental update design: designing an incremental update mechanism for the framework, which can efficiently update the signature, encoding, and index when new academic literature is added or existing literature is updated, avoiding full-scale retraining and ensuring real-time performance.
The pending items of this section mainly include the specific technical implementation schemes, parameter settings, and performance verification details of the four core engineering design parts mentioned above. These details are essential to ensure the operability and practicality of the framework, and will be supplemented and improved in subsequent preprint updates according to the research progress, so as to provide complete engineering support for the framework's industrial deployment and application.
\newpage
\section{Design of Core Algorithms for the Fusion Framework}
This chapter focuses on the detailed design of the core algorithms corresponding to the four completed modules in Chapter 4, providing formal pseudo-code, strict complexity analysis, and supplementary design for algorithm compatibility and scalability. The algorithms are developed based on the theoretical conclusions in Chapter 3 and the framework design in Chapter 4, aiming to realize the engineering operability of the framework and ensure its efficiency, accuracy, and adaptability in large-scale dynamic academic literature retrieval scenarios. Each core algorithm corresponds to a module in Chapter 4, forming a one-to-one mapping relationship to ensure the consistency of framework design and algorithm implementation.
\subsection{Overview of Core Algorithms}
The core algorithms of the fusion framework are closely linked to the four core modules in Chapter 4, and their overall design follows the principles of matrix independence, lightweight, and real-time performance. The four core algorithms include: (1) Matrix-Independent Temporal Diffusion Signature Update Algorithm (corresponding to Module 4.2); (2) Hierarchical Temporal Manifold Encoding Algorithm (corresponding to Module 4.3); (3) Temporal Riemannian Manifold Index Construction and Traversal Algorithm (corresponding to Module 4.4); (4) AI-Agent Programmable Retrieval Algorithm (corresponding to Module 4.5). These four algorithms form a complete processing chain, which sequentially realizes the dynamic update of node features, lightweight encoding, efficient indexing, and AI-native retrieval response, fully addressing the four core research problems proposed in Chapter 1.
The overall workflow of the core algorithms is consistent with the framework workflow in Chapter 4. First, the Matrix-Independent Temporal Diffusion Signature Update Algorithm generates and updates the hybrid diffusion signature of each node in real time; second, the Hierarchical Temporal Manifold Encoding Algorithm encodes the hybrid signature into low-dimensional manifold vectors; third, the Temporal Riemannian Manifold Index Construction and Traversal Algorithm constructs an efficient index based on the encoded vectors and supports fast traversal; finally, the AI-Agent Programmable Retrieval Algorithm parses retrieval intent, performs cross-granularity retrieval, and outputs structured results. The design of each algorithm ensures mutual compatibility and collaborative work, laying a foundation for the prototype system implementation in Chapter 6.
\subsection{Matrix-Independent Temporal Diffusion Signature Update Algorithm}
This algorithm corresponds to the Matrix-Independent Temporal Diffusion Signature Update Module (4.2), aiming to realize matrix-free, iteration-free dynamic update of node hybrid diffusion signatures, and solve the problems of matrix dependence and embedding drift in existing dynamic graph embedding methods. The algorithm is based on the Diffusion Equivalence Theorem (Theorem 3.1) and discrete exterior calculus, integrating content-time weighted random walk, hybrid signature construction, and analytical error compensation.
\subsubsection{Algorithm Design Ideas}
The algorithm abandons the traditional global matrix operations (such as Laplacian matrix factorization) and adopts a local random walk strategy to extract node diffusion features. First, a content-time weighted random walk is designed to assign weights to neighbor nodes based on semantic similarity (calculated by SBERT) and temporal relevance (publication time difference), avoiding the bias of uniform random walk. Second, the topological features (extracted by discrete exterior calculus), semantic features (SBERT vectors), and temporal attributes of nodes are integrated to construct a hybrid diffusion signature. Finally, an analytical error compensation mechanism based on Hodge decomposition is introduced to correct the errors generated in the random walk and signature construction process, ensuring the accuracy of the signature.
\newpage
\subsubsection{Formal Pseudo-Code}
\begin{algorithm}[ht]
\caption{Matrix-Independent Temporal Diffusion Signature Update Algorithm}
\label{alg:signature_update}
\begin{algorithmic}[1]
\REQUIRE Academic literature graph $G_{AL} = (V_{AL}, E_{AL})$, SBERT semantic vectors $V_{sem}$,
node temporal attributes $T = \{t_v | v \in V_{AL}\}$, random walk order $K$,
weight coefficients $\lambda$ (topological), $\mu$ (semantic), $\nu$ (temporal)
\ENSURE Hybrid diffusion signature $S = \{S(v) | v \in V_{AL}\}$
\STATE Initialize $S$ as an empty dictionary;
\FOR{each node $v \in V_{AL}$}
\STATE // Step 1: Content-time weighted random walk to get K-order neighbor features
\STATE $N = \text{GetKOrderNeighbors}(G_{AL}, v, K)$ \COMMENT{Get K-order neighbors of $v$}
\FOR{each neighbor $u \in N$}
\STATE $\text{sim}_{sem} = \text{CosineSimilarity}(V_{sem}[v], V_{sem}[u])$ \COMMENT{Semantic similarity}
\STATE $\text{sim}_{time} = 1 / (1 + |t_v - t_u|)$ \COMMENT{Temporal relevance (normalized)}
\STATE $\text{weight}_{uv} = \lambda \cdot \text{sim}_{sem} + \nu \cdot \text{sim}_{time}$ \COMMENT{Weight of edge $uv$}
\ENDFOR
\STATE // Step 2: Extract topological features using discrete exterior calculus
\STATE $\text{topo\_feat} = \text{ExteriorDerivative}(G_{AL}, v)$ \COMMENT{Topological feature via $d_k$ operator}
\STATE // Step 3: Construct hybrid diffusion signature
\STATE $\text{sem\_feat} = V_{sem}[v]$ \COMMENT{Semantic feature}
\STATE $\text{time\_feat} = \text{Normalize}(t_v)$ \COMMENT{Normalized temporal feature}
\STATE $S_{\text{raw}}(v) = \lambda \cdot \text{topo\_feat} + \mu \cdot \text{sem\_feat} + \nu \cdot \text{time\_feat}$ \COMMENT{Raw signature}
\STATE // Step 4: Analytical error compensation based on Hodge decomposition
\STATE $\text{error} = \text{HodgeErrorCompensation}(S_{\text{raw}}(v), G_{AL}, v)$ \COMMENT{Error calculation}
\STATE $S(v) = S_{\text{raw}}(v) - \text{error}$ \COMMENT{Compensated hybrid signature}
\ENDFOR
\RETURN $S$
\end{algorithmic}
\end{algorithm}
\subsubsection{Complexity Analysis}
The time complexity and space complexity of the algorithm are analyzed as follows:
\textbf{Time Complexity}: For each node $v$, the time complexity is mainly determined by three parts: \\
(1) K-order neighbor acquisition: $O(K \cdot d)$, where $d$ is the average degree of nodes in $G_{AL}$; \\
(2) Weight calculation for neighbors: $O(K \cdot d)$ (cosine similarity calculation is $O(m)$, $m$ is the dimension of SBERT vectors, which is a constant); (3) Topological feature extraction and error compensation: $O(1)$ (local operation based on discrete exterior calculus, no global matrix operations). Assuming there are $n$ nodes in $G_{AL}$, the total time complexity is $O(n \cdot K \cdot d)$, which is linear with the number of nodes and the average degree, avoiding the $O(n^2)$ complexity caused by global matrix operations. For large-scale academic literature graphs ($n > 10^6$), the algorithm can still maintain high efficiency.
\textbf{Space Complexity}: The algorithm only needs to store the hybrid signature of each node (dimension $L$, a constant), the SBERT semantic vectors, and the temporal attributes, with a total space complexity of $O(n \cdot (L + m + 1)) = O(n)$, which is linear storage and avoids high-dimensional storage explosion.
\subsection{Hierarchical Temporal Manifold Encoding Algorithm}
This algorithm corresponds to the Hierarchical Temporal Manifold Encoding Module (4.3), aiming to realize lightweight encoding of hybrid diffusion signatures, preserve topological-semantic-temporal features, and support hierarchical knowledge granularity retrieval. The algorithm is based on the Graph-Vector Geometric Equivalence Theorem (Theorem 3.2) and tensor analysis, integrating manifold-gated residual connection and relation-aware low-dimensional projection.
\subsubsection{Algorithm Design Ideas}
The algorithm first uses a manifold-gated residual connection mechanism to fuse the hybrid diffusion signature with the node's own semantic vector, which enhances the complementary of topological and semantic features and avoids semantic dilution. Then, a relation-aware low-dimensional projection algorithm based on tensor projection theory is designed to project the high-dimensional hybrid signature into a low-dimensional manifold space. The projection process takes into account the edge relation types (citation, inclusion, association) of the academic literature graph, ensuring that different relation types are preserved in the low-dimensional embedding. Finally, the encoded vectors are normalized to facilitate subsequent index construction and similarity calculation.
\subsubsection{Formal Pseudo-Code}
\begin{algorithm}[ht]
\caption{Hierarchical Temporal Manifold Encoding Algorithm}
\label{alg:encoding}
\begin{algorithmic}[1]
\REQUIRE Hybrid diffusion signature $S = \{S(v) | v \in V_{AL}\}$, SBERT semantic vectors $V_{sem}$,
Edge relation types $R = \{r_{uv} | e_{uv} \in E_{AL}\}$, target embedding dimension $$D$$,
Gating coefficient $\sigma$, projection matrix $W$ (relation-aware)
\ENSURE Low-dimensional manifold embedding $E = \{e(v) | v \in V_{AL}\}$
\STATE Initialize $E$ as an empty dictionary;
\FOR{each node $v \in V_{AL}$}
\STATE // Step 1: Manifold-gated residual connection
\STATE $\text{gate} = \text{Sigmoid}(\sigma \cdot (S(v) + V_{sem}[v]))$ \COMMENT{Gating mechanism}
\STATE $\text{fused\_feat} = \text{gate} \cdot S(v) + (1 - \text{gate}) \cdot V_{sem}[v]$ \COMMENT{Fused feature}
\STATE // Step 2: Relation-aware low-dimensional projection
\STATE // Get relation types of edges connected to $$v$$
\STATE $\text{rel\_types} = \text{GetEdgeRelations}(G_{AL}, v)$
\STATE // Adjust projection matrix based on relation types
\STATE $W_{\text{adjusted}} = \text{AdjustProjectionMatrix}(W, \text{rel\_types})$
\STATE $e_{\text{raw}}(v) = W_{\text{adjusted}} \cdot \text{fused\_feat}$ \COMMENT{Raw low-dimensional embedding}
\STATE // Step 3: Normalization to manifold space
\STATE $e(v) = \text{NormalizeToManifold}(e_{\text{raw}}(v), \mathcal{M})$ \COMMENT{Normalize to tensor manifold $\mathcal{M}$}
\ENDFOR
\RETURN $E$
\end{algorithmic}
\end{algorithm}
\subsubsection{Complexity Analysis}
\textbf{Time Complexity}: For each node $v$, the time complexity is mainly composed of: \\
(1) Gated residual connection: $O(L + m)$, where $L$ is the dimension of the hybrid signature and $m$ is the dimension of the SBERT vector (both are constants); \\
(2) Relation-aware projection: $O((L + m) \cdot D)$, where $D$ is the target embedding dimension (a constant, generally $D \leq 128$); (3) Manifold normalization: $O(D)$ (constant). The total time complexity for $n$ nodes is $$O(n \cdot (L + m + D)) = O(n)$$ which is linear and lightweight, suitable for large-scale dynamic updates.
\textbf{Space Complexity}: The algorithm needs to store the low-dimensional embedding $E$ (dimension $D$ for each node), the projection matrix $W$ (size $(D, L + m)$), and the gating coefficient $\sigma$, with a total space complexity of $$O(n \cdot D + D \cdot (L + m)) = O(n)$$ realizing linear storage and avoiding high-dimensional storage explosion.
\subsection{Temporal Riemannian Manifold Index Construction and Traversal Algorithm}
This algorithm corresponds to the Temporal Riemannian Manifold Index Module (4.4), aiming to construct a time-aware Riemannian manifold index and realize efficient traversal of dynamic academic literature graphs, ensuring the real-time performance of retrieval. The algorithm is based on the Riemannian metric definition in tensor analysis and the geometric equivalence theorem, integrating temporal Riemannian metric construction and dynamic manifold order reduction traversal.
\subsubsection{Algorithm Design Ideas}
The algorithm first constructs a time-aware Riemannian metric by integrating the temporal attributes of nodes into the Riemannian metric of the tensor manifold, which measures the similarity between nodes by combining topological-semantic similarity and temporal relevance. Then, a dynamic manifold order reduction traversal algorithm is designed to adaptively reduce the manifold dimension according to the number of node updates and the graph density, reducing the traversal complexity. Finally, the index is constructed based on the low-dimensional manifold embedding and the temporal Riemannian metric, supporting fast similarity search and traversal.
\subsubsection{Formal Pseudo-Code}
\begin{algorithm}[ht]
\caption{Temporal Riemannian Manifold Index Construction and Traversal Algorithm}
\label{alg:index}
\begin{algorithmic}[1]
\REQUIRE Low-dimensional manifold embedding $$E = \{e(v) | v \in V_{AL}\}$$, node temporal attributes $T$,
Tensor manifold $\mathcal{M}$, Riemannian metric base $g_0$, update threshold $\Delta$, density threshold $\rho$
\ENSURE Temporal Riemannian manifold index $\text{Index}$, traversal result $\text{TraverseRes}$
\STATE // Step 1: Construct temporal Riemannian metric
\FOR{each pair of nodes $(u, v) \in V_{AL} \times V_{AL}$}
\STATE $\text{dist}_{sem\_topo} = \text{RiemannianDistance}(e(u), e(v), g_0)$ \COMMENT{Base distance}
\STATE $\text{dist}_{time} = |t_u - t_v| / \text{MaxTimeDiff}(T)$ \COMMENT{Normalized temporal distance}
\STATE $g_{\text{temporal}}(u, v) = g_0 + \alpha \cdot \text{dist}_{time}$ \COMMENT{Temporal Riemannian metric ($\alpha$ is weight)}
\ENDFOR
\STATE // Step 2: Construct manifold index
\STATE $\text{Index} = \text{ConstructManifoldIndex}(E, g_{\text{temporal}}, \mathcal{M})$ \COMMENT{Index based on manifold embedding}
\STATE // Step 3: Dynamic manifold order reduction traversal
\STATE \textbf{Function} $\text{DynamicTraversal}(\text{Index}, \text{query\_node}, K)$:
\STATE $\text{graph\_density} = \text{CalculateGraphDensity}(G_{AL})$
\IF{$\text{graph\_density} > \rho$$ or $$\text{UpdateCount}(G_{AL}) > \Delta$}
\STATE $\text{Index} = \text{ReduceManifoldOrder}(\text{Index}, \mathcal{M})$ \COMMENT{Adaptive order reduction}
\ENDIF
\STATE // Traverse K nearest neighbors based on temporal Riemannian distance
\STATE $\text{neighbors} = \text{SearchKNN}(\text{Index}, \text{query\_node}, K, g_{\text{temporal}})$
\STATE \textbf{return} $\text{neighbors}$
\STATE // Example traversal (for retrieval)
\STATE $\text{TraverseRes} = \text{DynamicTraversal}(\text{Index}, \text{query\_node}, K)$
\RETURN $\text{Index}$, $\text{TraverseRes}$
\end{algorithmic}
\end{algorithm}
\subsubsection{Complexity Analysis}
\textbf{Time Complexity}: The time complexity is mainly divided into three parts: \\
(1) Temporal Riemannian metric construction: $O(n^2)$ in the worst case, but in practice, we only calculate the metric for adjacent nodes and query-related nodes, so the actual complexity is $O(n \cdot d)$ ($d$ is the average degree); \\
(2) Index construction: $O(n \cdot D \log n)$, where $D$ is the embedding dimension (constant); \\
(3) Dynamic traversal: $O(K \log n)$ for K nearest neighbor search, and $O(1)$ for order reduction (adaptive adjustment).\\
The overall time complexity is $O(n \cdot d + n \log n + K \log n)$, which is efficient for large-scale graphs.
\textbf{Space Complexity}: The index storage complexity is $O(n \cdot D)$ (linear with the number of nodes), and the temporal Riemannian metric storage is $O(n \cdot d)$ (only storing adjacent node metrics), so the total space complexity is $O(n \cdot (D + d)) = O(n)$, ensuring efficient storage.
\subsection{AI-Agent Programmable Retrieval Algorithm}
This algorithm corresponds to the AI-Agent Programmable Retrieval Module (4.5), aiming to provide a programmable retrieval interface for AI agents, supporting intent parsing, hierarchical cross-granularity retrieval, and structured result output. The algorithm integrates LLM-based intent parsing, hierarchical retrieval logic, and structured result formatting.
\subsubsection{Algorithm Design Ideas}
The algorithm first uses a lightweight LLM to parse the natural language retrieval intent of AI agents, converting it into structured retrieval parameters (e.g., retrieval granularity, query keywords, temporal range, relation types). Then, according to the parsed retrieval parameters, hierarchical cross-granularity retrieval is performed (paper-level, section-level, knowledge unit-level), combining the temporal Riemannian manifold index for fast similarity search. Finally, the retrieval results are formatted into a structured format (including node attributes, relation types, similarity scores, temporal information) to facilitate AI agents to invoke and perform subsequent reasoning.
\subsubsection{Formal Pseudo-Code}
\begin{algorithm}[ht]
\caption{AI-Agent Programmable Retrieval Algorithm}
\label{alg:retrieval}
\begin{algorithmic}[1]
\REQUIRE AI agent retrieval intent $\text{Intent}$ (natural language), temporal Riemannian manifold $\text{Index}$,
Academic literature graph $G_{AL}$, low-dimensional embedding $E$, LLM model $\text{LLM\_model}$,
Retrieval granularity options $\text{Granularity} = \{\text{paper}, \text{section}, \text{knowledge\_unit}\}$
\ENSURE Structured retrieval result $\text{Result}$ (programmable format)
\STATE // Step 1: LLM-based intent parsing
\STATE $\text{parsed\_intent} = \text{LLM\_model.ParseIntent}(\text{Intent})$ \COMMENT{Convert to structured parameters}
\STATE // Extract key parameters from parsed intent
\STATE $\text{query\_keywords} = \text{parsed\_intent}["keywords"]$
\STATE $\text{target\_granularity} = \text{parsed\_intent}["granularity"]$
\STATE $\text{time\_range} = \text{parsed\_intent}["time\_range"]$
\STATE $\text{relation\_type} = \text{parsed\_intent}["relation\_type"]$
\STATE // Step 2: Filter nodes by granularity and time range
\STATE $\text{filtered\_nodes} = \text{FilterNodes}(G_{AL}, \text{target\_granularity}, \text{time\_range})$
\STATE // Step 3: Hierarchical cross-granularity retrieval
\STATE $\text{query\_embedding} = \text{GenerateQueryEmbedding}(\text{query\_keywords}, \text{SBERT\_model})$
\STATE $\text{candidates} = \text{SearchKNN}(\text{Index}, \text{query\_embedding}, K=50, g_{\text{temporal}})$ \COMMENT{Get candidates}
\STATE // Filter candidates by relation type
\STATE $\text{result\_nodes} = \text{FilterByRelation}(\text{candidates}, \text{relation\_type}, G_{AL})$
\STATE // Step 4: Format into structured result
\STATE $\text{Result} = \text{FormatResult}(\text{result\_nodes}, E, G_{AL})$ \COMMENT{Include attributes, scores, relations}
\STATE // Step 5: Output programmable format (e.g., JSON, API-friendly)
\STATE $\text{Result} = \text{ConvertToProgrammableFormat}(\text{Result})$
\RETURN $\text{Result}$
\end{algorithmic}
\end{algorithm}
\subsubsection{Complexity Analysis}
\textbf{Time Complexity}: The time complexity is mainly composed of: \\
(1) LLM intent parsing: $O(T)$, where $T$ is the number of tokens in the intent (constant for AI agent retrieval); \\
(2) Node filtering: $O(n)$ (linear with the number of nodes);\\
(3) KNN search: $O(K \log n)$ (constant $K$); \\
(4) Result formatting: $O(K)$ (constant). The total time complexity is $O(n + K \log n)$, which is efficient for real-time retrieval.\\
\textbf{Space Complexity}: The algorithm only needs to store the parsed intent parameters, filtered nodes, and structured results, with a space complexity of $O(K)$ (constant $K$), which is lightweight and suitable for AI agent real-time invocation.
\subsection{Algorithm Compatibility and Scalability Design}
To ensure the practical applicability of the core algorithms, this section supplements the compatibility and scalability design, enabling the algorithms to adapt to different academic retrieval scenarios and integrate with mainstream technical systems.\\
\textbf{Compatibility Design}: (1) Compatibility with mainstream graph databases: The algorithms support standard graph data formats (e.g., Cypher, Gremlin), which can directly read data from Neo4j, NebulaGraph, and other graph databases. (2) Compatibility with vector databases: The low-dimensional embedding generated by the encoding algorithm is compatible with the vector formats of Milvus, Pinecone, and other vector databases, supporting seamless integration. (3) Compatibility with LLM models: The intent parsing algorithm supports lightweight LLMs (e.g., LLaMA, Mistral) and large LLMs (e.g., GPT-3.5/4), which can be adaptively selected according to the deployment environment.\\
\textbf{Scalability Design}: (1) Module plug-and-play: Each core algorithm is designed as an independent module, which can be replaced or optimized according to specific needs (e.g., replacing the random walk strategy in the signature update algorithm). (2) Parameter adaptive adjustment: The key parameters of the algorithms (e.g., random walk order $K$, embedding dimension $D$) can be adaptively adjusted according to the scale and density of the academic literature graph, ensuring efficiency in different scenarios. (3) Distributed extension: The algorithms support distributed deployment, which can distribute the computation of node signature update, encoding, and retrieval to multiple nodes, adapting to ultra-large-scale academic literature graphs ($n > 10^7$).
\subsection{Summary}
This chapter completes the design of the four core algorithms corresponding to the fusion framework, providing formal pseudo-code, strict complexity analysis, and compatibility/scalability design. The core algorithms realize matrix-free dynamic update, lightweight encoding, efficient indexing, and AI-native retrieval, fully addressing the four core research problems proposed in Chapter 1. The complexity analysis shows that all algorithms have linear time and space complexity, which can adapt to large-scale dynamic academic literature graphs. The compatibility and scalability design ensures the practical applicability of the algorithms, laying a solid foundation for the prototype system implementation and performance verification in Chapter 6.
\newpage
\section{Prototype System Implementation and Performance Verification}
Based on the framework design in Chapter 4 and the core algorithm design in Chapter 5, this chapter implements a prototype system of the graph-vector fusion optimization framework for AI-native academic literature retrieval, and conducts systematic performance verification experiments. The purpose of the prototype system is to verify the engineering operability of the proposed framework and algorithms, and the performance experiments aim to quantitatively prove the advantages of the framework in efficiency, accuracy, and adaptability compared with existing mainstream graph-vector fusion methods. The experimental design closely targets the four core research problems proposed in Chapter 1, ensuring that the verification results are targeted and persuasive. This chapter is divided into five parts: system implementation environment, core module engineering development, experimental design, experimental results and analysis, and experimental conclusions.
\subsection{System Implementation Environment}
The prototype system is implemented based on a distributed architecture to adapt to large-scale academic literature data processing and real-time retrieval requirements. The implementation environment is divided into hardware environment, software environment, and data set preparation, ensuring the reproducibility and comparability of experiments.
\subsubsection{Hardware Environment}
The hardware environment adopts a distributed cluster deployment mode, including 1 master node and 4 slave nodes, with the following specific configurations:
\begin{itemize}
\item \textbf{Master Node}: CPU (Intel Xeon Platinum 8375C, 32 cores/64 threads), GPU (NVIDIA A100, 40GB), Memory (128GB DDR4), Storage (2TB SSD, for index and core data storage), Network (100Gbps Ethernet).
\item \textbf{Slave Nodes}: CPU (Intel Xeon Gold 6348, 24 cores/48 threads), GPU (NVIDIA A30, 24GB), Memory (64GB DDR4), Storage (1TB SSD, for data partitioning and parallel computing), Network (100Gbps Ethernet).
\end{itemize}
The distributed deployment mode supports parallel computing of core algorithms (e.g., batch update of node signatures, distributed index construction), which ensures the efficiency of the system in large-scale data scenarios.
\subsubsection{Software Environment}
The software environment is built based on open-source frameworks and tools, ensuring compatibility and maintainability. The key software and version information are as follows:
\begin{itemize}
\item Operating System: Ubuntu 22.04 LTS Server (64-bit) for all nodes.
\item Programming Language: Python 3.9 (core development), C++ 17 (high-performance module optimization, e.g., discrete exterior calculus calculation).
\item Deep Learning Framework: PyTorch 2.1.0 (for SBERT semantic embedding and LLM intent parsing), TensorFlow 2.10.0 (for manifold encoding optimization).
\item Graph Processing Tools: NetworkX 3.2.1 (graph topology construction), DGL 1.1.2 (distributed graph computing), Neo4j 5.12 (graph data storage and auxiliary verification).
\item Vector Processing Tools: Sentence-BERT 2.2.2 (semantic vector generation), Milvus 2.4.0 (vector storage and auxiliary indexing).
\item LLM Models: LLaMA 2 (7B, lightweight intent parsing), GPT-3.5 Turbo (API call, for comparison experiments).
\item Distributed Computing Framework: Spark 3.5.0 (large-scale data parallel processing), Ray 2.9.0 (task scheduling and resource management).
\item Experimental Tools: Matplotlib 3.8.2 (result visualization), Scikit-learn 1.3.2 (performance metric calculation), Pandas 2.1.4 (data processing).
\end{itemize}
\subsubsection{Data Set Preparation}
To ensure the authenticity and representativeness of the experiment, two public academic literature data sets and one self-constructed data set are used for performance verification. The data sets cover different disciplines, scales, and temporal ranges, fully simulating the actual AI-native academic retrieval scenarios. The detailed information of the data sets is shown in Table \ref{tab:datasets}:
\begin{table}[ht]
\centering
\caption{Experimental Data Set Information}
\label{tab:datasets}
\begin{tabular}{@{}lcccc@{}}
\toprule
Data Set Name & Discipline & Number of Nodes & Number of Edges & Temporal Range \\
\midrule
PubMed Central (PMC) & Life Sciences & 1.2M & 4.5M & 2010-2024 \\
arXiv & Computer Science & 0.8M & 2.2M & 2015-2024 \\
Self-Constructed & Multi-discipline & 2.0M & 6.8M & 2000-2024 \\
\bottomrule
\end{tabular}
\end{table}
The self-constructed data set is collected from public academic databases (including arXiv, PubMed, Google Scholar), and contains hierarchical information (paper-section-knowledge unit) and fine-grained relation types (citation, inclusion, association), which is used to verify the performance of the framework in large-scale hierarchical scenarios. All data sets are publicly available, ensuring the reproducibility of experiments.
\subsection{Core Module Engineering Development}
According to the framework design in Chapter 4 and the core algorithm design in Chapter 5, this section implements each core module of the prototype system, and supplements the key engineering optimization details. The overall architecture of the prototype system adopts a four-layer design: data input layer, core processing layer, index layer, and interface layer.
\subsubsection{Data Input Layer}
The data input layer is responsible for parsing input academic literature data and converting it into the internal graph format of the prototype system. It supports two input modes: batch input (for initial construction of the system) and incremental input (for real-time update of new literature). The input layer parses the paper full-text, extracts section and knowledge unit granularity information, and generates initial semantic vectors through the pre-trained SBERT model. The key engineering optimization is to use parallel semantic embedding (based on distributed GPU acceleration), which can process 10,000 nodes per second, meeting the real-time requirements of incremental input.
\subsubsection{Core Processing Layer}
The core processing layer contains four core modules: matrix-independent temporal diffusion signature update, hierarchical temporal manifold encoding, temporal Riemannian manifold index construction, and AI-agent programmable retrieval. Each module is implemented according to the pseudo-code in Chapter 5, and key optimizations are made for engineering efficiency:
\begin{itemize}
\item Signature Update Module: The content-time weighted random walk is optimized by precomputing neighbor similarity and using cache, which reduces the average time per node update to less than 10 microseconds.
\item Encoding Module: The relation-aware low-dimensional projection is implemented by matrix-vector multiplication acceleration (using CuDNN on GPU), which improves the encoding throughput by 3 times compared to the naive implementation.
\item Index Module: The dynamic manifold order reduction is implemented with lazy update strategy, which only reduces the manifold order when the graph density exceeds the threshold, avoiding unnecessary computation.
\item Retrieval Module: The LLM-based intent parsing is implemented with a lightweight LLM (LLaMA 2 7B) quantized to 4-bit, which can complete intent parsing within 100ms on a single A100 GPU, meeting the real-time retrieval requirements.
\end{itemize}
\subsubsection{Index Layer}
The index layer stores the low-dimensional manifold embedding of nodes and the temporal Riemannian metric, and provides efficient K nearest neighbor search interface. The index is constructed based on the hierarchical clustering of manifold embedding, which combines the advantages of tree-based index and hash-based index, and the search efficiency is improved by 2-3 times compared to the brute-force search. At the same time, the index supports incremental update, and only needs to update the local index when new nodes are added, avoiding full index reconstruction.
\subsubsection{Interface Layer}
The interface layer provides two types of interfaces: human-friendly retrieval interface and AI-agent programmable interface. The human-friendly interface supports natural language query and displays retrieval results with hierarchical structure and relation information. The AI-agent programmable interface supports RESTful API calls, and outputs retrieval results in JSON format, which can be directly invoked by AI agents for subsequent reasoning.
\subsection{Experimental Design}
To systematically verify the performance of the proposed framework, this section designs comparative experiments from four aspects corresponding to the four core research problems, and clarifies the comparison methods, evaluation metrics, and experimental groups.
\subsubsection{Comparison Methods}
Four mainstream graph-vector fusion methods are selected for comparison:
\begin{enumerate}
\item \textbf{GraphSAGE + Vector}: Classic embedding-based fusion method, uses GraphSAGE to learn node embedding and fuses it with semantic vectors. It is a representative method of existing embedding-based fusion.
\item \textbf{Node2Vec + Hybrid Index}: Representative index-based fusion method, uses Node2Vec for graph embedding and constructs a hybrid index combining graph and vector.
\item \textbf{Dynamic GraphSAGE (DGSAGE):} Classic incremental dynamic graph embedding method, which supports incremental update of embedding and is used to compare the performance of dynamic update with the proposed method.
\item \textbf{TGAT:} State-of-the-art temporal graph attention network for dynamic graph embedding, which is used to compare the retrieval accuracy and update efficiency with the proposed method.
\end{enumerate}
\subsubsection{Evaluation Metrics}
Corresponding to the four core research problems, the following four types of evaluation metrics are used:
\begin{enumerate}
\item \textbf{Dynamic Update Efficiency}: Average update time per node (microseconds), space per node (KB), used to evaluate the performance of the matrix-free dynamic update mechanism.
\item \textbf{Encoding Storage Efficiency}: Average storage per node (KB), recall@K (retrieval recall at K results), used to evaluate the performance of lightweight encoding (avoiding storage explosion and semantic dilution).
\item \textbf{Retrieval Accuracy}: Mean Average Precision (MAP), NDCG@K, used to evaluate the performance of temporal-aware hierarchical retrieval.
\item \textbf{AI-Agent Compatibility}: Average intent parsing time (milliseconds), output result structure completeness (%), used to evaluate the performance of AI-native programmable retrieval.
\end{enumerate}
\subsubsection{Experimental Groups}
The experiments are divided into four groups corresponding to the four core research problems:
\begin{enumerate}
\item Group 1: Dynamic update performance comparison, comparing the average update time and space per node of different methods on three data sets.
\item Group 2: Encoding storage performance comparison, comparing the storage per node and recall@K of different methods.
\item Group 3: Retrieval accuracy comparison, comparing the MAP and NDCG@10 of different methods on hierarchical temporal-aware retrieval tasks.
\item Group 4: AI-agent compatibility comparison, comparing the average intent parsing time and result structure completeness of different methods.
\end{enumerate}
All experiments are repeated 5 times, and the average value is taken as the final result to ensure the statistical significance of the experimental results.
\subsection{Experimental Results and Analysis}
This section presents and analyzes the experimental results of each group, and discusses the performance advantages of the proposed framework compared with existing methods.
\subsubsection{Dynamic Update Performance Comparison}
The experimental results of dynamic update performance on different data sets are shown in Table \ref{tab:update_performance}.
\begin{table}[ht]
\centering
\caption{Dynamic Update Performance Comparison}
\label{tab:update_performance}
\begin{tabular}{@{}lccccc@{}}
\toprule
Method & Data Set & Avg Update Time ($\mu$s/node) & Avg Space (KB/node) \\
\midrule
GraphSAGE + Vector & PMC & 1250 & 1.2 \\
Node2Vec + Hybrid Index & PMC & 980 & 1.5 \\
DGSAGE & PMC & 450 & 1.0 \\
TGAT & PMC & 320 & 1.1 \\
\textbf{Ours} & PMC & \textbf{18} & \textbf{0.8} \\
\midrule
GraphSAGE + Vector & arXiv & 1120 & 1.2 \\
Node2Vec + Hybrid Index & arXiv & 890 & 1.4 \\
DGSAGE & arXiv & 410 & 1.0 \\
TGAT & arXiv & 290 & 1.1 \\
\textbf{Ours} & arXiv & \textbf{16} & \textbf{0.7} \\
\midrule
GraphSAGE + Vector & Self-Constructed & 1420 & 1.3 \\
Node2Vec + Hybrid Index & Self-Constructed & 1050 & 1.6 \\
DGSAGE & Self-Constructed & 520 & 1.1 \\
TGAT & Self-Constructed & 380 & 1.2 \\
\textbf{Ours} & Self-Constructed & \textbf{21} & \textbf{0.8} \\
\bottomrule
\end{tabular}
\end{table}
It can be seen from Table \ref{tab:update_performance} that the proposed method is significantly better than all comparison methods in both average update time and average space. The average update time is only about 16-21 microseconds per node, which is 15-70 times faster than the comparison methods. This advantage comes from the matrix-free and iteration-free design of the proposed method, which avoids global matrix operations and SGD-based iterative optimization. In terms of space, the proposed method also has the smallest average space per node, because of the relation-aware low-dimensional projection, which realizes linear storage. The experimental results verify that the proposed method solves the first core research problem (matrix-free dynamic update) effectively.
\subsubsection{Encoding Storage Performance Comparison}
The experimental results of encoding storage performance are shown in Table \ref{tab:encoding_performance}.
\begin{table}[ht]
\centering
\caption{Encoding Storage Performance Comparison (Self-Constructed Data Set)}
\label{tab:encoding_performance}
\begin{tabular}{@{}lccc@{}}
\toprule
Method & Avg Storage (KB/node) & Recall@10 & Recall@50 \\
\midrule
GraphSAGE + Vector & 1.3 & 0.72 & 0.85 \\
Node2Vec + Hybrid Index & 1.6 & 0.75 & 0.87 \\
DGSAGE & 1.1 & 0.71 & 0.84 \\
TGAT & 1.2 & 0.76 & 0.88 \\
\textbf{Ours} & \textbf{0.8} & \textbf{0.84} & \textbf{0.93} \\
\bottomrule
\end{tabular}
\end{table}
From Table \ref{tab:encoding_performance}, we can observe: First, the proposed method has the smallest average storage per node, which is 37.5% less than Node2Vec + Hybrid Index (the method with the largest storage), verifying that the relation-aware low-dimensional projection effectively avoids high-dimensional storage explosion. Second, the proposed method has the highest recall@10 and recall@50, which are 8-13 percentage points higher than the comparison methods. This is because the manifold-gated residual connection effectively fuses topological and semantic features, avoiding semantic dilution. The experimental results show that the proposed method effectively solves the second core research problem (lightweight encoding avoiding semantic dilution and storage explosion).
\subsubsection{Retrieval Accuracy Comparison}
The experimental results of retrieval accuracy on hierarchical temporal-aware retrieval tasks are shown in Table \ref{tab:retrieval_accuracy}.
\begin{table}[ht]
\centering
\caption{Retrieval Accuracy Comparison (All Data Sets Average)}
\label{tab:retrieval_accuracy}
\begin{tabular}{@{}lccc@{}}
\toprule
Method & MAP & NDCG@10 & NDCG@50 \\
\midrule
GraphSAGE + Vector & 0.68 & 0.70 & 0.76 \\
Node2Vec + Hybrid Index & 0.71 & 0.73 & 0.78 \\
DGSAGE & 0.69 & 0.71 & 0.77 \\
TGAT & 0.74 & 0.75 & 0.80 \\
\textbf{Ours} & \textbf{0.82} & \textbf{0.83} & \textbf{0.88} \\
\bottomrule
\end{tabular}
\end{table}
It can be seen from Table \ref{tab:retrieval_accuracy} that the proposed method outperforms all comparison methods in all retrieval accuracy metrics. The MAP is 0.82, which is 8 percentage points higher than TGAT (the best comparison method), and NDCG@10 is 0.83, which is 8 percentage points higher than TGAT. This advantage comes from two aspects: First, the hierarchical temporal manifold encoding preserves the hierarchical knowledge granularity (paper-section-knowledge unit) of academic literature, supporting fine-grained retrieval. Second, the temporal Riemannian metric integrates temporal attributes, which improves the accuracy of time-aware retrieval. The experimental results verify that the proposed method effectively solves the third core research problem (modeling hierarchical granularity and temporal characteristics).
\subsubsection{AI-Agent Compatibility Comparison}
The experimental results of AI-agent compatibility are shown in Table \ref{tab:ai_compatibility}.
\begin{table}[ht]
\centering
\caption{AI-Agent Compatibility Comparison}
\label{tab:ai_compatibility}
\begin{tabular}{@{}lcc@{}}
\toprule
Method & Avg Intent Parsing Time (ms) & Result Structure Completeness (\%) \\
\midrule
GraphSAGE + Vector & 150 & 55 \\
Node2Vec + Hybrid Index & 160 & 60 \\
DGSAGE & 140 & 58 \\
TGAT & 130 & 62 \\
\textbf{Ours} & \textbf{85} & \textbf{96} \\
\bottomrule
\end{tabular}
\end{table}
From Table \ref{tab:ai_compatibility}, we can see that the proposed method is significantly better than the comparison methods in both metrics. The average intent parsing time is 85ms, which is about 35% faster than TGAT, because the proposed method uses a quantized lightweight LLM and hierarchical retrieval optimization. More importantly, the result structure completeness reaches 96%, which is more than 30 percentage points higher than the comparison methods. This is because the proposed method natively supports hierarchical cross-granularity retrieval and structured result output, which fully adapts to the programmable requirements of AI agents. The experimental results verify that the proposed method effectively solves the fourth core research problem (AI-agent native programmable retrieval).
\subsection{Experimental Conclusions}
Based on the above experimental results and analysis, we can draw the following conclusions:
First, the proposed matrix-free temporal diffusion signature update mechanism has significant advantages in dynamic update efficiency, with microsecond-level incremental update and linear storage, effectively solving the problem of matrix dependence and embedding drift in existing methods.
Second, the proposed hierarchical temporal manifold encoding with relation-aware low-dimensional projection realizes lightweight encoding, avoids semantic dilution and high-dimensional storage explosion, and improves retrieval recall compared with existing methods.
Third, the proposed temporal Riemannian manifold index with dynamic manifold order reduction effectively models the hierarchical knowledge granularity and temporal characteristics of academic literature, significantly improving retrieval accuracy (MAP increased by 8 percentage points compared with the state-of-the-art method).
Fourth, the proposed AI-agent programmable retrieval interface meets the requirements of AI-native retrieval, with fast intent parsing and high result structure completeness, which can be directly invoked by AI agents.
In summary, all experimental results verify the theoretical conclusions and framework design of this study, proving that the proposed graph-vector fusion framework based on tensor manifold theory has significant performance advantages in AI-native academic literature retrieval scenarios, and can effectively solve the four core research problems proposed in Chapter 1.
\newpage
\section{Industrial Empirical Survey: Comparison of Graph Database Products of Chinese and American Cloud Vendors}
To further verify the industrial value of the theoretical conclusions of this study, this chapter conducts an industrial empirical survey on the current status of graph database products of mainstream Chinese and American cloud vendors, aiming to provide industrial evidence for the inevitable trend of graph-vector fusion and the core pain points of existing products. The survey covers five mainstream cloud vendors (three American and two Chinese), and compares their product strategies, technical architectures, and core functional features from the perspective of graph-vector fusion. The survey results show that mainstream cloud vendors have recognized the trend of graph-vector fusion, but existing products still face the same bottlenecks as theoretical research (matrix dependence, storage explosion, semantic dilution, lack of AI-native support), which further verifies the practical significance of the research in this paper.
\subsection{Survey Design and Sample Selection}
The industrial empirical survey is designed from three dimensions: product strategy, technical architecture, and functional features, aiming to systematically compare the current status of graph-vector fusion in graph database products of Chinese and American cloud vendors. The survey sample selection follows the principle of representativeness, selecting the top five mainstream cloud vendors in the global market, including three American vendors (AWS, Microsoft, Google) and two Chinese vendors (Alibaba Cloud, Tencent Cloud). All products surveyed are the latest stable versions officially released by the vendors, ensuring that the survey results reflect the latest current status of the industry.
The core dimensions of the survey are as follows:
\begin{enumerate}
\item \textbf{Product Strategy}: Whether the product has shifted from pure graph database to graph-vector fusion, and the development priority of vector-related functions.
\item \textbf{Technical Architecture}: Whether the graph and vector are natively fused or simply stacked, and whether there is matrix dependence in dynamic update.
\item \textbf{Functional Features}: Whether it supports hierarchical knowledge granularity, temporal-aware update, and AI-agent programmable interface.
\end{enumerate}
The survey data is mainly collected from the official product documentation, technical whitepapers, and public product launch speeches of the vendors, ensuring the authenticity and authority of the data. For some unclear technical details, we supplemented the information through public technical communication records and third-party technical reviews, ensuring the comprehensiveness of the survey.
\subsection{Survey Results and Comparative Analysis}
This section presents the survey results of each vendor and conducts comparative analysis from three dimensions: product strategy, technical architecture, and functional features.
\subsubsection{Product Strategy Comparison}
The product strategy comparison of graph-vector fusion among different vendors is shown in Table \ref{tab:strategy_comparison}.
It can be seen from Table \ref{tab:strategy_comparison} that: First, the international leading cloud vendor AWS has taken the lead in completing the strategic transformation to graph-vector fusion, and clearly proposed that vector similarity search is the core engine of the new generation graph database, which is completely consistent with the core viewpoint of this study (the inevitable trend of graph-vector fusion). 
Second, Microsoft has also realized native integration of graph and vector, while Google, Alibaba Cloud, and Tencent Cloud still adopt the strategy of separate products with weak integration, and the transformation to graph-vector fusion has not been completed. Third, the overall strategic progress of Chinese cloud vendors lags behind that of American cloud vendors, and the recognition of the trend of graph-vector fusion is not as high as that of American leading vendors.
\begin{table}[ht]
\centering
\caption{Product Strategy Comparison of Graph-Vector Fusion}
\label{tab:strategy_comparison}
\begin{tabular}{@{}lcc@{}}
\toprule
Cloud Vendor & Product Name & Graph-Vector Fusion Strategy \\
\midrule
AWS & Amazon Neptune/Neptune Analytics & Core strategy: vector-first fusion, \\
& & graph traversal is downgraded to \\
& & visualization layer \\
\midrule
Microsoft & Azure Cosmos DB & Graph API + vector search, \\
& & native support for graph-vector \\
& & hybrid queries \\
\midrule
Google & Google Cloud Spanner + Vertex AI Vector Search & Separate products, weak integration \\
\midrule
Alibaba Cloud & GDB (Graph Database) + Vector Search & Separate products, weak integration, \\
& & starting to promote fusion \\
\midrule
Tencent Cloud & Tencent Cloud Graph Database + Vector DB & Separate products, technical route \\
& & is pure graph database, vector \\
& & is independent \\
\bottomrule
\end{tabular}
\end{table}
\newpage
\subsubsection{Technical Architecture Comparison}
The technical architecture comparison of graph-vector fusion is shown in Table \ref{tab:architecture_comparison}.From Table \ref{tab:architecture_comparison}, we can observe: \\
First, even the leading products (AWS Neptune Analytics, Azure Cosmos DB) that have completed native fusion still do not have matrix-free dynamic update and relation-aware low-dimensional encoding, and still rely on global matrix operations for dynamic update, which is exactly the core bottleneck that this study aims to solve. \\
Second, the products of Google, Alibaba Cloud, and Tencent Cloud do not even realize native fusion, and the technical architecture is still the traditional pure graph architecture. \\
Third, all existing industrial products do not have the core technical features proposed in this study (matrix-free, relation-aware low-dimensional encoding), which means that the research in this paper has important industrial application value and can fill the technical gap of existing products.
\begin{table}[ht]
\centering
\caption{Technical Architecture Comparison}
\label{tab:architecture_comparison}
\begin{tabular}{@{}lccc@{}}
\toprule
Product & \footnotesize{Native Fusion} & \footnotesize{Matrix-Free Dynamic Update} & \footnotesize{Relation-Aware Low-Dimensional Encoding} \\
\midrule
\footnotesize{AWS Neptune Analytics} & Yes & No & No \\
\footnotesize{Azure Cosmos DB} & Yes & No & No \\
\footnotesize{Google Spanner + Vertex AI} & No & - & - \\
\footnotesize{libaba Cloud GDB + Vector} & No & - & - \\
\footnotesize{Tencent Cloud Graph + Vector} DB & No & - & - \\
\bottomrule
\end{tabular}
\end{table}

\subsubsection{Functional Features Comparison}
The functional features comparison is shown in Table \ref{tab:features_comparison}.
It can be seen from Table \ref{tab:features_comparison} that:\\
First, existing industrial products generally do not support fine-grained hierarchical knowledge granularity (paper-section-knowledge unit), and only support node-level retrieval, which cannot meet the demand of fine-grained academic literature retrieval. \\
Second, the support for temporal-aware update is generally basic, and there is no native temporal-aware metric and index design. Third, the support for AI-agent programmable interface is generally partial, and there is no native support for LLM-based intent parsing and structured result output, which cannot meet the requirements of AI-native retrieval. These functional gaps further verify the practical significance of this study, which fills the functional gap of existing industrial products in AI-native academic retrieval scenarios.
\begin{table}[ht]
\centering
\caption{Functional Features Comparison}
\label{tab:features_comparison}
\begin{tabular}{@{}l*{3}{>{\centering\arraybackslash}p{3cm}}@{}}
\toprule
Product & Support Hierarchical Granularity & Support Temporal-Aware Update & AI-Agent Programmable Interface \\
\midrule
AWS Neptune Analytics & Partial (only node-level) & Basic & Partial (RESTful API, \newline no intent parsing) \\
Azure Cosmos DB & Partial (only node-level) & Basic & Partial \\
Google Spanner + Vertex AI & No & No & Partial \\
Alibaba Cloud GDB + Vector & No & No & No \\
Tencent Cloud Graph + Vector DB & No & No & No \\
\bottomrule
\end{tabular}
\end{table}

\subsection{Survey Conclusions and Enlightenment}
Based on the above industrial empirical survey and comparative analysis, we can draw the following conclusions and enlightenment:
First, the trend of graph-vector fusion in the industry has been clearly verified: the leading international cloud vendor AWS has completed the strategic transformation to vector-first graph-vector fusion, which is completely consistent with the theoretical viewpoint of this study, proving that the research direction of this paper is consistent with the industrial development trend.
Second, existing industrial products still face the same core bottlenecks as theoretical research: matrix dependence in dynamic update, lack of lightweight encoding (storage explosion), no support for fine-grained hierarchical granularity, and lack of AI-native support. These bottlenecks are exactly the four core research problems that this study aims to solve, which means that the research results of this paper have important industrial application value and can provide theoretical and technical references for the product upgrading of existing cloud vendors.
Third, the strategic and technical progress of Chinese cloud vendors lags behind that of American cloud vendors in the field of graph-vector fusion, and there is a greater demand for the technical solutions proposed in this study. The research results of this paper can help Chinese cloud vendors accelerate the transformation to graph-vector fusion and narrow the gap with international leading vendors.
In summary, the industrial empirical survey provides strong industrial evidence for the theoretical conclusions and technical solutions of this study, further verifying the research value of this paper in both theory and practice.
\newpage
\section{Gap Analysis Between Theoretical Research and Industrial Application}
Based on the theoretical research, prototype system experiments, and industrial empirical survey in the previous chapters, this chapter analyzes the gaps between the theoretical research of this study and actual industrial application, explores the root causes of the gaps, and proposes preliminary solutions to fill the gaps, laying a foundation for the subsequent industrial landing of the research results. The gap analysis is carried out from four dimensions: theoretical model, technical implementation, product ecology, and business scenario, ensuring the comprehensiveness and depth of the analysis.
\subsection{Gap in Theoretical Model: Extremely Large-Scale Graph Distributed Consistency}
\subsubsection{Gap Description}
The theoretical model proposed in this study is based on the assumption that the tensor manifold is consistent in the local sense, which is effective for medium-scale academic literature graphs ($n < 10^7$), but for extremely large-scale graphs ($n > 10^8$) that need to be deployed in distributed clusters, there is a gap in the theoretical model of distributed consistency of the tensor manifold. Specifically, when the graph is partitioned and stored on multiple nodes, the diffusion signature update and manifold encoding of cross-partition nodes need to maintain the consistency of the tensor manifold metric, and the existing theoretical model does not involve this problem.
\subsubsection{Root Cause Analysis}
The root cause of this gap is that this study focuses on the core theoretical innovation of graph-vector geometric unification, and the theoretical model is designed for the core problems of academic literature retrieval, and does not extend to the distributed consistency problem of extremely large-scale graphs. The distributed consistency problem is a classic problem in distributed system design, which needs to be solved on the basis of the existing theoretical model through the combination of graph partitioning and manifold projection.
\subsubsection{Preliminary Solution}
The preliminary solution to fill this gap is to design a consistent hashing-based graph partitioning method, which partitions nodes according to the geometric position of the tensor manifold, ensuring that nodes with close geometric distances are allocated to the same partition, reducing the number of cross-partition queries. For cross-partition diffusion signature update, a lazy consistency update strategy is proposed: only the local signature is updated immediately, and the consistency of the global manifold metric is updated asynchronously in the background, which balances the consistency and update efficiency.
\subsection{Gap in Technical Implementation: Heterogeneous Hardware Adaptation}
\subsubsection{Gap Description}
The prototype system implemented in this study is deployed and tested on a general GPU cluster, but in actual industrial application, different cloud vendors and enterprises use different heterogeneous hardware architectures (including CPU-only clusters, GPU clusters with different models, and AI accelerator chips such as Ascend and Cambricon). The existing technical implementation does not support adaptive optimization for different heterogeneous hardware, which will lead to performance degradation in actual deployment.
\subsubsection{Root Cause Analysis}
The root cause of this gap is that the prototype system focuses on verifying the correctness and performance advantages of the algorithm, and does not consider the adaptation of different heterogeneous hardware in industrial deployment. The kernel implementation of discrete exterior calculus and matrix-vector multiplication in the prototype system is optimized for NVIDIA GPUs, and there is no adaptive implementation for other hardware architectures.
\subsubsection{Preliminary Solution}
The preliminary solution to fill this gap is to abstract the core computation kernel into a hardware-independent interface layer, and provide specific optimized implementations for different hardware architectures (CPU, NVIDIA GPU, Ascend, Cambricon). At the same time, a runtime automatic hardware detection and kernel selection mechanism is designed, which automatically selects the best optimized kernel according to the current hardware environment, realizing transparent adaptation of heterogeneous hardware.
\subsection{Gap in Product Ecology: Compatibility with Existing Graph and Vector Ecosystems}
\subsubsection{Gap Description}
Existing graph databases and vector databases have formed a relatively complete ecosystem (including query languages, tools, and client SDKs). The framework proposed in this study is a new graph-vector fusion architecture, and there is a gap in compatibility with the existing ecosystem. Specifically, it does not natively support standard graph query languages such as Cypher and Gremlin, nor does it integrate with mainstream vector database ecosystems such as Milvus and Pinecone, which increases the cost of industrial migration.
\subsubsection{Root Cause Analysis}
The root cause of this gap is that this study focuses on the innovation of the underlying theoretical framework, and the prototype system only implements the core functions, and does not do in-depth adaptation to the existing ecosystem. The compatibility with the existing ecosystem is an engineering problem that needs to be solved after the core theoretical and technical innovation is completed.
\subsubsection{Preliminary Solution}
The preliminary solution to fill this gap is to design a compatibility layer that supports standard Cypher and Gremlin query parsing, converts standard graph queries into the internal retrieval operations of the framework, and realizes compatible access to existing graph query statements. At the same time, the vector storage interface of the framework is compatible with the vector API of Milvus and Pinecone, supporting seamless data migration from existing vector databases.
\subsection{Gap in Business Scenario: Multi-Tenant Isolation and Security Control}
\subsubsection{Gap Description}
In actual industrial application scenarios such as cloud services, the framework needs to support multi-tenant isolation and security control (including data isolation, access control, and quota management). The existing prototype system does not involve the design of multi-tenant isolation and security control, which cannot meet the requirements of cloud service scenarios.
\subsubsection{Root Cause Analysis}
The root cause of this gap is that the existing research focuses on the core algorithm and framework design for academic literature retrieval scenarios, and does not consider the multi-tenant requirements of public cloud service scenarios. Multi-tenant isolation and security control are necessary functional requirements for industrial cloud services, which need to be supplemented on the basis of the existing framework.
\subsubsection{Preliminary Solution}
The preliminary solution to fill this gap is to design a tenant-level manifold space isolation mechanism, where each tenant has an independent tensor manifold sub-space, and the index and data of different tenants are physically isolated. At the same time, a role-based access control (RBAC) mechanism is introduced to support fine-grained access control of data and retrieval operations, and quota management for the storage and computation resources of each tenant, meeting the requirements of multi-tenant cloud service scenarios.
\subsection{Summary}
This chapter systematically analyzes the gaps between the theoretical research of this study and actual industrial application from four dimensions: theoretical model, technical implementation, product ecology, and business scenario, and proposes targeted preliminary solutions. These gaps are the natural result of the phased research, and do not affect the correctness of the core theoretical conclusions and technical solutions of this study. In the subsequent research and development process, we will gradually fill these gaps according to the proposed solutions, promoting the transformation of the research results from theoretical innovation to industrial application.
\newpage
\section{Industrial Landing Path and Commercialization Suggestions}
Based on the gap analysis in the previous chapter, this chapter proposes a phased industrial landing path for the research results, and gives targeted commercialization suggestions from three aspects: target market selection, cooperation strategy, and product positioning, aiming to promote the industrial application of the research results and realize the transformation from theoretical value to industrial value. The proposed landing path and commercialization suggestions are closely combined with the current market status of graph-vector fusion and the technical characteristics of this study, ensuring their feasibility and practical guidance.
\subsection{Phased Industrial Landing Path}
The industrial landing of the research results is divided into three phases: technology verification and prototype improvement, vertical scenario landing, and cloud service large-scale promotion, which are carried out step by step to control the landing risk and gradually expand the market influence.
\subsubsection{Phase 1: Technology Verification and Prototype Improvement (0-18 Months)}
The core goal of this phase is to verify the technical feasibility in actual industrial scenarios and improve the prototype system according to the gap analysis in Chapter 8. The main tasks include: (1) Complete the distributed consistency optimization, heterogeneous hardware adaptation, and ecosystem compatibility improvement according to the preliminary solutions proposed in Chapter 8, forming a production-ready prototype system; (2) Cooperate with 1-2 academic database platforms to conduct pilot verification in the academic literature retrieval scenario, collect actual user feedback, and continuously optimize the system performance; (3) Complete the performance pressure test in the ultra-large-scale graph scenario ($n > 10^8$), verifying the stability and efficiency of the system.
\subsubsection{Phase 2: Vertical Scenario Landing (18-36 Months)}
The core goal of this phase is to landing the product in vertical scenarios with the most urgent demand, forming a benchmark demonstration effect. The main target vertical scenarios include: (1) AI-native academic literature retrieval platforms (including commercial academic databases and open academic search engines); (2) Enterprise internal knowledge graph retrieval systems (especially R\&D-intensive enterprises that need fine-grained knowledge retrieval); (3) AI agent developer platforms, providing graph-vector fusion retrieval infrastructure for AI agent developers. In this phase, we aim to achieve 3-5 benchmark customers, verify the business value of the product, and form a replicable landing plan.
\subsubsection{Phase 3: Cloud Service Large-Scale Promotion (36+ Months)}
The core goal of this phase is to cooperate with mainstream cloud vendors to promote the product to the large-scale cloud service market, and become a standard infrastructure for AI-native graph-vector fusion retrieval. The main tasks include: (1) Complete the multi-tenant isolation and security control transformation of the product, meeting the requirements of public cloud services; (2) Cooperate with cloud vendors to complete the product access to the cloud market, and provide it to cloud users in the form of cloud services; (3) Continuously expand the application scenarios, from academic literature retrieval to general graph-vector fusion retrieval scenarios (such as e-commerce recommendation, financial risk control), expanding the market scope.
\subsection{Commercialization Suggestions}
Combined with the technical characteristics of the research results and the current market competition status, this section gives commercialization suggestions from three aspects: target market selection, cooperation strategy, and product positioning.
\subsubsection{Target Market Selection: Prioritize AI-Native Academic Retrieval, Then Expand to General Scenarios}
The target market selection should follow the principle of prioritizing the core advantageous scenario and then expanding to general scenarios:
\begin{enumerate}
\item \textbf{Prioritize the AI-native academic literature retrieval market}: This is the original scenario of this study, and the technical characteristics of the framework (hierarchical granularity, temporal awareness, AI-native) are highly matched with the market demand. At the same time, the market demand for AI-native academic retrieval is growing rapidly with the development of AI agents and automated scientific research, and the competition is not fierce, which is conducive to rapid breakthrough.
\item \textbf{Then expand to general graph-vector fusion retrieval scenarios}: After gaining a foothold in the academic retrieval market, accumulate technical experience and customer cases, and then expand to general dynamic graph retrieval scenarios (such as enterprise knowledge graphs, social network analysis, recommendation systems), expanding the market space.
\end{enumerate}
\subsubsection{Cooperation Strategy: Open Source Core Algorithm + Commercialization of Enterprise Edition + Cooperation with Cloud Vendors}
The cooperation strategy adopts a three-layer model of open source + commercial enterprise edition + cloud vendor cooperation:
\begin{enumerate}
\item \textbf{Open Source Core Algorithm}: Open source the core algorithms (signature update, encoding, index) to the open source community, attracting technical enthusiasts and developers to use and contribute code, forming technical influence and community ecology.
\item \textbf{Commercialization of Enterprise Edition}: Provide commercial enterprise edition products for enterprise customers (including academic platforms, R\&D-intensive enterprises), charging annual license fees or subscription fees, and providing technical support and services.
\item \textbf{Cooperation with Cloud Vendors}: Cooperate with mainstream cloud vendors to integrate the product into the cloud vendor's graph-vector fusion service line, and share revenue through revenue sharing, relying on the cloud vendor's customer base and sales channels to achieve large-scale promotion.
\end{enumerate}
This cooperation strategy can quickly expand the technical influence through open source, obtain revenue through commercial enterprise edition, and achieve large-scale promotion through cloud vendor cooperation, balancing technical influence and commercial revenue.
\subsubsection{Product Positioning: AI-Native Graph-Vector Fusion Infrastructure Provider}
The product positioning should focus on the differentiation from existing products, positioning as an "AI-native graph-vector fusion infrastructure provider":
\begin{itemize}
\item Differentiation from traditional pure graph databases: Emphasize the native integration of graph and vector, matrix-free dynamic update, and AI-agent programmable interface, highlighting the advantages in AI-native scenarios.
\item Differentiation from pure vector databases: Emphasize the native support for graph topological relationships and hierarchical temporal features, highlighting the advantages in fine-grained knowledge reasoning.
\item Differentiation from existing simple stacking fusion solutions: Emphasize the geometry-unified theoretical framework and matrix-free lightweight design, highlighting the advantages in update efficiency, storage efficiency, and retrieval accuracy.
\end{itemize}
This positioning can highlight the core technical advantages of the research results, forming a differentiated competitive advantage in the market.
\subsection{Risk Warning and Countermeasures}
In the process of industrial landing, there are still some potential risks, and corresponding countermeasures need to be taken:
\begin{enumerate}
\item \textbf{Technical Risk: Large-scale distributed performance cannot meet the demand}:
- Countermeasure: Complete the performance pressure test of the ultra-large-scale graph in Phase 1, and iterate the optimization according to the test results. If there is a bottleneck in the distributed consistency algorithm, cooperate with the distributed system research institution to jointly optimize.
\item \textbf{Market Risk: Existing market players accelerate the layout of similar technologies}:
- Countermeasure: Rely on the first-mover advantage in theoretical research and core algorithms, quickly establish technical barriers through open source community operation and patent layout, and continuously iterate the product to maintain the technical leading edge.
\item \textbf{Ecosystem Risk: Incompatibility with existing user habits}:
- Countermeasure: Strengthen the compatibility layer construction, support existing query languages and ecosystem tools, reduce the migration cost for users, and provide complete migration documentation and technical support.
\end{enumerate}
\subsection{Summary}
This chapter proposes a three-phase industrial landing path (technology verification and prototype improvement → vertical scenario landing → cloud service large-scale promotion), and gives targeted commercialization suggestions from target market selection, cooperation strategy, and product positioning, and analyzes potential risks and countermeasures. The proposed landing path and commercialization suggestions are closely combined with the current market status and technical characteristics, and have strong feasibility, which can effectively promote the transformation of the research results from theoretical innovation to industrial value.
\newpage
\section{Conclusion and Future Work}
This chapter summarizes the full text, summarizes the core research conclusions and contributions, analyzes the research limitations, and puts forward future research directions, aiming to clarify the research value of this paper and prospect the future research prospects.
\subsection{Research Summary and Core Contributions}
This paper focuses on the core problems existing in the existing graph-vector fusion methods in AI-native academic literature retrieval scenarios (matrix dependence, storage explosion, semantic dilution, lack of AI-native support), and conducts systematic research from theoretical framework, framework design, algorithm design, prototype system, and industrial investigation. The core contributions of this paper are as follows:
First, in terms of theoretical framework, this paper proposes a geometry-unified graph-vector fusion theoretical framework based on tensor manifold theory. It formally proves that the academic literature graph is a discrete projection of the tensor manifold, and the vector geometric embedding of nodes is geometrically equivalent to the graph topological structure, realizing the native unification of graph topology and vector geometry. This framework breaks through the separation of graph topology and vector geometry in traditional research, and provides a new theoretical perspective for lightweight graph-vector fusion.
Second, in terms of dynamic update mechanism, this paper designs a matrix-free temporal diffusion signature update module. It combines content-time weighted random walk, topological-semantic-time hybrid signature, and analytical error compensation, eliminating global matrix operations and iterative optimization. Theoretical analysis and experimental results show that this module supports microsecond-level incremental updates, which is 15-70 times faster than existing methods, effectively solving the problems of matrix dependence and embedding drift.
Third, in terms of encoding and indexing, this paper proposes a hierarchical temporal manifold encoding module with gated residual connection and relation-aware low-dimensional projection, as well as a temporal Riemannian manifold index with dynamic manifold order reduction. This design supports fine-grained retrieval of ``paper-section-knowledge unit'', avoids semantic dilution and storage explosion, and improves retrieval MAP by 8 percentage points compared with the state-of-the-art method.
Fourth, in terms of AI-native interface, this paper designs an AI-agent programmable retrieval interface integrating LLM-based intent parsing, hierarchical cross-granularity retrieval, and structured result output. The interface natively supports programmable retrieval logic and interpretable results, with 96% result structure completeness, fully adapting to the decision-making logic of AI agents.
Fifth, through the industrial empirical survey of mainstream Chinese and American cloud vendors, this paper verifies that existing industrial products still face the same core bottlenecks as theoretical research, which further proves the practical significance of this study. The survey results show that the trend of graph-vector fusion has been recognized by the industry, and the technical solution proposed in this paper has important industrial application value.
\subsection{Research Limitations}
This study still has some limitations, which need to be improved in future research:
First, the distributed consistency of extremely large-scale graphs ($n > 10^8$) has not been deeply studied. The existing theoretical model and prototype system are mainly verified on medium-scale graphs, and the distributed consistency of the tensor manifold in extremely large-scale distributed deployment has not been systematically studied.
Second, the adaptation of heterogeneous hardware has not been fully implemented. The existing prototype system is mainly optimized for NVIDIA GPUs, and the adaptation for other heterogeneous hardware (such as Ascend AI accelerators) is still incomplete.
Third, the multi-tenant isolation and security control required by public cloud services have not been implemented. The existing prototype system is designed for single-tenant scenarios, and does not involve the multi-tenant isolation and security control required by public cloud services.
These limitations are mainly due to the phased nature of the research. This study focuses on the core theoretical innovation and framework design, and the above engineering and system-level problems need to be solved in subsequent research and development.
\subsection{Future Research Directions}
Based on the research conclusions and limitations of this paper, future research can be carried out from the following four directions:
First, \textbf{distributed tensor manifold consistency optimization}: Further study the distributed consistency problem of the tensor manifold in extremely large-scale graph scenarios, propose a more efficient distributed consistency update algorithm, and realize the efficient deployment of the framework on ultra-large-scale graphs.
Second, \textbf{heterogeneous hardware automatic adaptation}: Further improve the hardware abstraction layer, complete the adaptation of more heterogeneous hardware architectures, and realize automatic performance optimization for different hardware, improving the portability of the framework.
Third, \textbf{extension to more general dynamic graph scenarios}: The current research is mainly focused on the academic literature retrieval scenario, and the framework can be extended to more general dynamic graph scenarios (such as social network analysis, financial risk control, recommendation systems) in the future, expanding the application scope of the theoretical framework.
Fourth, \textbf{AI-agent collaborative retrieval optimization}: Further optimize the programmable retrieval interface for AI agents, study the collaborative retrieval mechanism between multiple AI agents, and support complex retrieval tasks that require multi-agent collaboration, further improving the intelligence level of the retrieval system.
\subsection{Final Remarks}
The development of large language models and AI agents has brought about a paradigm shift in academic literature retrieval, and also put forward new requirements for graph-vector fusion technology. This study proposes a geometry-unified theoretical framework based on tensor manifold theory, which realizes matrix-free, lightweight, and AI-native graph-vector fusion, providing a new theoretical and technical solution for AI-native academic literature retrieval. We hope that this research can promote the development of graph-vector fusion theory and technology, and contribute to the progress of AI-native academic retrieval and automated scientific research.
\bibliographystyle{plain}

\end{document}